\newtheorem{proposition}{Proposition}
\newtheorem{defin}{\bf Definition}
\newenvironment{proof}{\noindent{\bf Proof}}{$\diamond$}
\def\ga{\mbox{Ga}}
\def\be{\mbox{Be}}
\def\bin{\mbox{Bin}}
\def\no{\mbox{N}}
\def\po{\mbox{Po}}
\def\E{\mbox{E}}
\def\V{\mbox{Var}}
\def\Cov{\mbox{Cov}}
\def\Cor{\mbox{Corr}}
\def\P{\mbox{P}}
\def\bep{\mbox{BeP}}
\def\bc{{\bf c}}
\def\bx{{\bf x}}
\def\bZ{{\bf Z}}
\def\simind{\stackrel{\mbox{\scriptsize{ind}}}{\sim}}
\newcommand{\bpi}{\boldsymbol{\pi}}
\newcommand{\btheta}{\boldsymbol{\theta}}
\newcommand{\bupsilon}{\boldsymbol{\upsilon}}
\newcommand{\TT}{\mathbb{T}}
\newcommand{\XX}{\mathcal{X}}
\begin{document}

\baselineskip=24pt

\title{\bf Bayesian nonparametric dynamic hazard rates in evolutionary life tables}
\author{{\sc Luis E. Nieto-Barajas} \\[2mm]
{\sl Department of Statistics, ITAM, Mexico} \\[2mm]
{\small {\tt lnieto@itam.mx}} \\}
\date{}
\maketitle

\begin{abstract}
In the study of life tables the random variable of interest is usually assumed discrete since mortality rates are studied for integer ages. In dynamic life tables a time domain is included to account for the evolution effect of the hazard rates in time. In this article we follow a survival analysis approach and use a nonparametric description of the hazard rates. We construct a discrete time stochastic processes that reflects dependence across age as well as in time. This process is used as a bayesian nonparametric prior distribution for the hazard rates for the study of evolutionary life tables. Prior properties of the process are studied and posterior distributions are derived. We present a simulation study, with the inclusion of right censored observations, as well as a real data analysis to show the performance of our model. 
\end{abstract}

\vspace{0.2in} \noindent {\sl Keywords}: Beta process, discrete time processes, latent variables, moving average process, stationary process.

\section{Introduction}
\label{sec:intro}

Modelling and forecasting mortality rates has been of interest since the last decade of previous century. A mortality rate is basically a hazard rate function of age for a random variable $X$ that indicates the age of death in a population. Non-negative random variables is the subject of study of survival analysis \citep[e.g.][]{klein&moeschberger:03}. Random behaviour is characterised by the survival, $S(x)$, and the hazard, $h(x)$, functions, for a specific age $x>0$. To distinguish the change of mortality in time, we denote by $X_t$ the age of death at time $t$ (year or decade) in a population, for $t\in\TT$, with $\TT$ the set of periods under study. Again, characterization of the set of random variables $\{X_t\}_{t\in\TT}$ can be done by including the time domain in the survival and hazard functions, $S_t(x)$ and $h_t(x)$, respectively. 

Typical models assume that a transformation of the hazard rate function, usually logarithmic or logistic, can be written in terms of age and time effects. For example, \cite{lee&carter:92} consider $\log{h_t(x)}=a_x+b_xk_t+\epsilon_{x,t}$, with $\epsilon_{x,t}$ an error term with zero mean and constant variance. They impose some estimability constraints and use minimum squares to estimate the unknown parameters $a_x$, $b_x$ and $k_t$. Several other authors have further study this model in different directions: disaggregation by sex \citep{lee:00}; improvements in the forecasting step by including two sets of singular value decomposition vectors \citep{renshaw&haberman:03}; including a cohort effect \citep{renshaw&haberman:06}; and modelling of multiple populations via neural networks \citep{richman&wuthrich:19}.

On a different perspective, \cite{currie&al:04} propose a Poisson regression model with P-splines to smooth mortality tables over age and time. \cite{debon&al:08} consider dynamic tables, and after removing trends, model the residuals via a Gaussian random field in age and time. \cite{debon&al:10} summarises several geostatistical approaches and includes cohort effects. 

All previous approaches consider that the hazard rate, $h_t(x)$, has been observed and focus in modelling and smoothing its variability in age and time. However they are ignoring that the ``observed'' $h_t(x)$ is actually a nonparametric estimate of the hazard function, which in discrete age, is defined as $h_t(x)=\P(X_t=x\mid X_t\geq x)$. Therefore a frequentist nonparametric estimate of $h_t(x)$ is the number of people in the population who died at age $x$ and time $t$ divided by the number of people who are still alive at age $x$ for the same time. What we propose is to follow a survival analysis approach and consider each individual a realisation of the random variable $X_t$ and estimate $h_t(x)$ via a bayesian nonparametric approach. For this we introduce a stochastic process that includes dependence across age and time. This process is then used as a nonparametric prior distribution for the hazard rates, which allows us to estimate future behaviours of the hazards in time. An advantage of following a survival analysis approach is that we can include partial information, as censored observations, into the analysis. 

The description of the rest of the paper is as follows: In Section \ref{sec:model} we define the bayesian nonparametric model, sampling and prior distributions, and characterise its properties. Section \ref{sec:post} shows how to obtain posterior inference for the hazard rates under the presence of exact and right censored observations. Section \ref{sec:numerical} considers numerical analysis with simulated and real data. We conclude with some remarks in Section \ref{sec:conclusion}.

\section{Bayesian nonparametric model}
\label{sec:model}

Let $X_t$ be a non-negative discrete random variable, with support $\XX=\{1,\ldots,N_X\}$, that denotes the age of death at time $t\in\TT$, with $\TT=\{1,\ldots,N_T\}$, and $N_X$ and $N_T$ the maximum age and time under study, respectively. The probability law of $X_t$ is described be the density function $f_t(x)=\P(X_t=x)$ for $x\in\XX$, the survival function $S_t(x)=\P(X_t>x)$ for $x>0$, and by the hazard rate function $h_t(x)=\P(X_t=x\mid X_t\geq x)$ for $x\in\XX$, and for $t\in\TT$.

Les us denote the hazard rate for age $x$ at time $t$ by the parameter $\pi_{x,t}$, that is, $h_t(x)=\pi_{x,t}$. Since a discrete hazard rate is a probability, we require that $\pi_{x,t}\in(0,1)$, for all $x\in\XX$ and $t\in\TT$. Then we can re-write the survival and density functions in terms of the parameters $\bpi=\{\pi_{x,t}\}$ as \citep[e.g.][]{klein&moeschberger:03}
\begin{equation}
\label{eq:sf}
S_t(x\mid\bpi)=\prod_{z\leq x}(1-\pi_{z,t})I(x>0)\quad\text{and}\quad f_t(x\mid\bpi)=\pi_{x,t}\prod_{z<x}(1-\pi_{z,t})I(x\in\XX), 
\end{equation}
where  $I(x\in A)$ is the indicator function that takes the value of one if $x\in A$ and zero otherwise. Functions \eqref{eq:sf} can be seen as  nonparametric models defined by $\bpi$.

The idea is to define a convenient nonparametric prior for $\bpi$ that recognises a dependence across different ages $x$'s as well as dependence across different times $t$'s. \cite{nieto&walker:02} proposed a Markov beta process indexed by a single dimension, say $x$, through the use of a latent process. Later on, \cite{jara&al:13} generalised the beta process to allow for dependencies on more than one lag. In both cases, for a particular specification of the parameters, the marginal distribution of the processes are beta. We now extend these ideas to indexes in two dimensions.

Let $\{\pi_{x,t}\}_{(x,t)\in\XX\times\TT}$ be a discrete age-time stochastic process defined for the cross product of sets $\XX$ and $\TT$. For each $(x,t)\in\XX\times\TT$ we require a latent variable $\upsilon_{x,t}$, plus a single latent variable $\omega$, common to all indexes. The idea is to relate the hazards at age $x$ and time $t$ to the hazards of previous ages, say $x-1,\ldots,x-p$, as well as to the hazards of previous years, say $t-1,\ldots,t-q$, where $p$ and $q$ define the order of the dependence. Instead of linking the parameters of interest $\pi_{x,t}$ directly, we link them through the latent $\upsilon_{x,t}$'s. The extra latent parameter $\omega$ plays the role of an anchor needed to obtain a desired marginal distribution. 

In general, we denote by $\partial_{x,t}$ the set of indexes (neighbours) that will influence the location $(x,t)$. Apart from the neighbours, we also need to include the actual location. That is, for an order--$(p,q)$ dependence model, with $p,q>1$, we define $\partial_{x,t}=\{(x,t),(x-1,t),\ldots,(x-p,t),(x,t-1),\ldots,(x,t-q)\}$. To illustrate, in Figure \ref{fig:graph} we show a graphical representation of dependence of orders $p=1$ and $q=1$ for a scenario with $N_X=4$ and $N_T=2$.

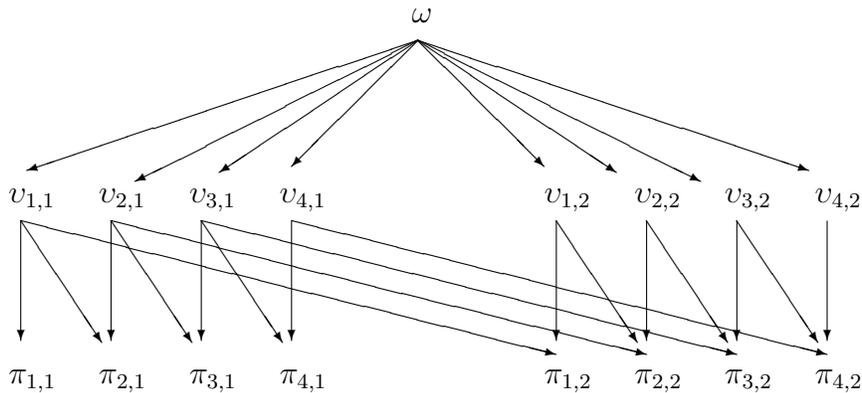
\begin{figure}[ht]
\setlength{\unitlength}{0.8cm}
\begin{center}
\begin{picture}(20,8)
\put(9.5,7.0){$\omega$} 
\put(9.6,6.7){\vector(-3,-1){6.5}}
\put(9.6,6.7){\vector(-2,-1){4.7}}
\put(9.6,6.7){\vector(-3,-2){3.3}}
\put(9.6,6.7){\vector(-1,-1){2.1}}
\put(2.8,4.0){$\upsilon_{1,1}$} 
\put(4.3,4.0){$\upsilon_{2,1}$} 
\put(5.8,4.0){$\upsilon_{3,1}$} 
\put(7.3,4.0){$\upsilon_{4,1}$} 
\put(3.0,3.7){\vector(0,-1){2}}
\put(4.5,3.7){\vector(0,-1){2}}
\put(6.0,3.7){\vector(0,-1){2}}
\put(7.5,3.7){\vector(0,-1){2}}
\put(2.8,1.0){$\pi_{1,1}$} 
\put(4.3,1.0){$\pi_{2,1}$} 
\put(5.8,1.0){$\pi_{3,1}$} 
\put(7.3,1.0){$\pi_{4,1}$} 
\put(3.0,3.7){\vector(2,-3){1.35}}
\put(4.5,3.7){\vector(2,-3){1.35}}
\put(6.0,3.7){\vector(2,-3){1.35}}
\put(9.6,6.7){\vector(3,-1){6.5}}
\put(9.6,6.7){\vector(2,-1){4.7}}
\put(9.6,6.7){\vector(3,-2){3.3}}
\put(9.6,6.7){\vector(1,-1){2.1}}
\put(11.7,4.0){$\upsilon_{1,2}$} 
\put(13.2,4.0){$\upsilon_{2,2}$} 
\put(14.7,4.0){$\upsilon_{3,2}$} 
\put(16.2,4.0){$\upsilon_{4,2}$} 
\put(11.9,3.7){\vector(0,-1){2}}
\put(13.4,3.7){\vector(0,-1){2}}
\put(14.9,3.7){\vector(0,-1){2}}
\put(16.4,3.7){\vector(0,-1){2}}
\put(11.7,1.0){$\pi_{1,2}$} 
\put(13.2,1.0){$\pi_{2,2}$} 
\put(14.7,1.0){$\pi_{3,2}$} 
\put(16.2,1.0){$\pi_{4,2}$} 
\put(11.9,3.7){\vector(2,-3){1.35}}
\put(13.4,3.7){\vector(2,-3){1.35}}
\put(14.9,3.7){\vector(2,-3){1.35}}
\put(3.0,3.7){\vector(4,-1){8.9}}
\put(4.5,3.7){\vector(4,-1){8.9}}
\put(6.0,3.7){\vector(4,-1){8.9}}
\put(7.5,3.7){\vector(4,-1){8.9}}
\end{picture}
\end{center}
\vspace{-0.5cm}
\caption{Graphical representation of dependence of orders $p=1$ and $q=1$.}
\label{fig:graph} 
\end{figure}

Let $\be(a,b)$ denotes a beta distribution with mean $a/(a+b)$, and $\bin(c,\omega)$ denotes a binomial distribution with number of Bernoulli trials $c$ and success probability $\omega$. Let $\bupsilon=\{\upsilon_{x,t}\}$ be the whole set of latent variables, then the law of the process $\{\pi_{x,t}\}$ is defined hierarchically by 
\begin{align}
\nonumber
\omega&\sim\be(a,b)\\
\label{eq:xtbeta}
\upsilon_{x,t}\mid\omega&\simind\bin(c_{x,t},\omega)\\
\nonumber
\pi_{x,t}\mid\bupsilon&\simind\be\left(a+\sum_{(z,s)\in\partial_{x,t}}\upsilon_{z,s}\,,\:b+\sum_{(z,s)\in\partial_{x,t}}(c_{z,s}-\upsilon_{z,s})\right)
\end{align}
where $a$, $b$ and $\bc=\{c_{x,t}\}$ are non-negative parameters for all $x\in\XX$ and $t\in\TT$. We define $\upsilon_{x,t}=0$ with probability one (w.p.1) and $c_{x,t}=0$ for any $x,t<0$. In notation we say $\bpi\sim\bep_{p,q}(a,b,\bc)$, that is, an order--$(p,q)$ beta process with parameters $a$, $b$ and $\bc$, defined by \eqref{eq:xtbeta}. Properties of this beta process are given in Proposition \ref{prop:1}. In particular, the correlation induced and the marginal distribution can be obtained in closed form. 

\begin{proposition}
\label{prop:1}
Let $\{\pi_{x,t}\}_{(x,t)\in\XX\times\TT}$ be a discrete age-time stochastic process $\bep_{p,q}(a,b,\bc)$, defined by equations \eqref{eq:xtbeta}. 
\begin{enumerate} 
\item[(i)] The marginal distribution of $\pi_{x,t}$ for all $(x,t)\in\XX\times\TT$, is $\be(a,b)$, 
\item[(ii)] The correlation between any pair $(\pi_{x,t},\pi_{x',t'})$ is given by $$\Cor(\pi_{x,t},\pi_{x',t'})=\frac{(a+b)\left(\sum_{(z,s)\in\partial_{x,t}\cap\partial_{x',t'}}c_{z,s}\right)+\left(\sum_{(z,s)\in\partial_{x,t}}c_{z,s}\right)\left(\sum_{(z,s)\in\partial_{x',t'}}c_{z,s}\right)}{\left(a+b+\sum_{(z,s)\in\partial_{x,t}}c_{z,s}\right)\left(a+b+\sum_{(z,s)\in\partial_{x',t'}}c_{z,s}\right)},$$
\item[(iii)] If $c_{x,t}=0$ for all $(x,t)\in\XX\times\TT$ then the $\pi_{x,t}$'s become independent. 
\end{enumerate}
\end{proposition}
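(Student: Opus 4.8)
The plan is to exploit beta--binomial conjugacy together with the conditional independence built into \eqref{eq:xtbeta}. For a location $(x,t)$ write $C_{x,t}=\sum_{(z,s)\in\partial_{x,t}}c_{z,s}$ and $U_{x,t}=\sum_{(z,s)\in\partial_{x,t}}\upsilon_{z,s}$; since the $\upsilon_{z,s}$ are conditionally independent given $\omega$ with $\upsilon_{z,s}\mid\omega\sim\bin(c_{z,s},\omega)$, additivity of the binomial gives $U_{x,t}\mid\omega\sim\bin(C_{x,t},\omega)$, so marginally $U_{x,t}$ is beta--binomial with parameters $(C_{x,t},a,b)$. For (i) I would marginalise in two stages: conditionally on $U_{x,t}$ the hazard $\pi_{x,t}$ is $\be(a+U_{x,t},b+C_{x,t}-U_{x,t})$, hence its marginal density equals $\sum_{u=0}^{C_{x,t}}\binom{C_{x,t}}{u}\frac{B(a+u,b+C_{x,t}-u)}{B(a,b)}\,\frac{\pi_{x,t}^{a+u-1}(1-\pi_{x,t})^{b+C_{x,t}-u-1}}{B(a+u,b+C_{x,t}-u)}$ with $B$ the beta function. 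The normalising constants cancel, the factor $\pi_{x,t}^{a-1}(1-\pi_{x,t})^{b-1}/B(a,b)$ comes out, and the remaining sum is $\sum_u\binom{C_{x,t}}{u}\pi_{x,t}^{u}(1-\pi_{x,t})^{C_{x,t}-u}=1$ by the binomial theorem, leaving precisely the $\be(a,b)$ density. (Matching all moments works too, but this collapse is cleaner.)

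For (ii), take two distinct locations and abbreviate $\partial=\partial_{x,t}$, $\partial'=\partial_{x',t'}$, $C=C_{x,t}$, $C'=C_{x',t'}$, $C^{*}=\sum_{(z,s)\in\partial\cap\partial'}c_{z,s}$, and likewise $U,U'$ for the neighbourhood sums of latents. Conditionally on $\bupsilon$ the two hazards are independent, so $\E(\pi_{x,t}\pi_{x',t'})=\E\!\big[\tfrac{a+U}{a+b+C}\cdot\tfrac{a+U'}{a+b+C'}\big]$, while by (i) the marginal means are $a/(a+b)$ and the marginal variances $ab/\{(a+b)^2(a+b+1)\}$. The one nontrivial ingredient is $\E(UU')$: write $U=U^{*}+U_{\partial\setminus\partial'}$ and $U'=U^{*}+U_{\partial'\setminus\partial}$ as sums over disjoint index sets, use conditional independence given $\omega$ together with $\E((U^{*})^2\mid\omega)=C^{*}\omega(1-\omega)+(C^{*})^2\omega^2$ to obtain $\E(UU'\mid\omega)=C^{*}\omega+(CC'-C^{*})\omega^2$, and then average over $\omega$ using $\E(\omega)=a/(a+b)$ and $\E(\omega^2)=a(a+1)/\{(a+b)(a+b+1)\}$. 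Substituting and simplifying $\Cov(\pi_{x,t},\pi_{x',t'})$, the terms collapse (using $(a+b)-a=b$) to $ab\{(a+b)C^{*}+CC'\}/\{(a+b)^2(a+b+1)(a+b+C)(a+b+C')\}$; dividing by the two equal standard deviations $\sqrt{ab/\{(a+b)^2(a+b+1)\}}$ yields the stated expression for $\Cor(\pi_{x,t},\pi_{x',t'})$. I expect this step to be the main obstacle, though only a bookkeeping one: keeping the overlap decomposition of $\partial$ and $\partial'$ straight and carrying the covariance cancellation through without slips. (For coinciding locations the correlation is trivially one; the displayed formula is for distinct pairs, since then $\pi_{x,t}$ is not conditionally independent of itself.)

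For (iii), if $c_{x,t}=0$ for every $(x,t)$ then each $\upsilon_{x,t}\mid\omega\sim\bin(0,\omega)$ is degenerate at $0$, so $U_{x,t}=0$ w.p.1 and $C_{x,t}=0$; the last line of \eqref{eq:xtbeta} then reads $\pi_{x,t}\mid\bupsilon\simind\be(a,b)$, a conditional law that depends on neither $\bupsilon$ nor $\omega$. Consequently the joint density of $\bpi$ factorises as $\prod_{(x,t)}\be(\pi_{x,t};a,b)$, so the $\pi_{x,t}$ are i.i.d.\ $\be(a,b)$, in particular independent; note this is strictly stronger than what part (ii) delivers, since that only shows the pairwise correlations vanish.
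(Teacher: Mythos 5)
Your proposal is correct and follows essentially the same route as the paper: additivity of the conditionally independent binomials and beta--binomial conjugacy for (i), conditional independence given the latents plus the decomposition of the neighbourhood sums into the overlap $\partial_{x,t}\cap\partial_{x',t'}$ and its complements for (ii), and degeneracy of the $\upsilon_{x,t}$ for (iii). The only differences are cosmetic — you make the conjugacy collapse explicit via the mixture density and compute $\E(UU')$ by raw moments where the paper applies the iterated covariance formula twice — and your closing remarks (the formula applying to distinct pairs, and (iii) giving full joint independence rather than mere uncorrelatedness) are accurate.
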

\begin{proof}
For $(i)$ we note that conditionally on $\omega$ the $\upsilon_{x,t}$ are independent binomials, so the sum is again binomial, i.e., $\sum_{(z,s)\in\partial_{x,t}}\upsilon_{z,s}\mid\omega\sim\bin\left(\sum_{(z,s)\in\partial_{x,t}}c_{z,s},\omega\right)$. Integrating out $\omega$, the marginal distribution of $\sum_{(z,s)\in\partial_{x,t}}\upsilon_{z,s}$ is beta-binomial with parameters $(a,b,\sum_{(z,s)\in\partial_{x,t}}c_{z,s})$. Using conjugacy properties between the beta and the binomial distributions \citep[e.g.][]{bernardo&smith:00} we obtain $\pi_{x,t}\sim\be(a,b)$ marginally. For $(ii)$ we rely on conditional independence properties and the iterative covariance formula. Then $\Cov(\pi_{x,t},\pi_{x',t'})=\E\{\Cov(\pi_{x,t},\pi_{x',t'}\mid\bupsilon)\}+\Cov\{\E(\pi_{x,t}\mid\bupsilon),\E(\pi_{x',t'}\mid\bupsilon)\}$. The first term in the sum becomes zero since $\pi_{x,t}$'s are conditional independent given $\bupsilon$. The second term, after removing the additive constants of the expected values, is rewritten as $\Cov\left(\sum_{(z,s)\in\partial_{x,t}}\upsilon_{z,s},\sum_{(z,s)\in\partial_{x',t'}}\upsilon_{z,s}\right)$ divided by $(a+b+\sum_{(z,s)\in\partial_{x,t}}c_{z,s})(a+b+\sum_{(z,s)\in\partial_{x',t'}}c_{z,s})$. Concentrating in the numerator, and using the iterative covariance formula for a second time, we get $\E\{\Cov(\sum_{(z,s)\in\partial_{x,t}}\upsilon_{z,s},\sum_{(z,s)\in\partial_{x',t'}}\upsilon_{z,s}\mid\omega)\}+\Cov\{\E(\sum_{(z,s)\in\partial_{x,t}}\upsilon_{z,s}\mid\omega),$ \linebreak[4] $\E(\sum_{(z,s)\in\partial_{x',t'}}\upsilon_{z,s}\mid\omega)\}$. The first term, after separating the sums in the common part, reduces to $\E\{\V(\sum_{(z,s)\in\partial_{x,t}\cap\partial_{x',t'}}\upsilon_{z,s}\mid\omega)\}$ which becomes $(\sum_{(z,s)\in\partial_{x,t}\cap\partial_{x',t'}}c_{z,s})\E\{\omega(1-\omega)\}$. After computing the expected value, this can be re-written as $(\sum_{(z,s)\in\partial_{x,t}\cap\partial_{x',t'}}c_{z,s})(a+b)\V(\omega)$. The second term, after computing the expectations becomes \linebreak[4] $(\sum_{(z,s)\in\partial_{x,t}}c_{z,s})(\sum_{(z,s)\in\partial_{x',t'}}c_{z,s})\V(\omega)$. Finally, using $(i)$, we note that $\V(\pi_{x,t})=\V(\pi_{x',t'})$ $=\V(\omega)$, so dividing the covariance between the product of the standard deviations we obtain the result. For $(iii)$ we note that if $c_{x,t}=0$ then $\upsilon_{x,t}=0$ w.p.1. If we do this for all $(x,t)\in\XX,\TT$ then it is straightforward to see that the $\pi_{x,t}'s$ become independent.
\end{proof}

\bigskip
Many things can be concluded from Proposition \ref{prop:1}. Parameters $a$ and $b$ are the two shape parameters of the marginal beta distribution of the process. Parameters $\bc$ control the strength of the correlation between any two elements $\pi_{x,t}$ and $\pi_{x',t'}$. The correlation is stronger, if the two elements share more parameters. Some locations can be more influential than others if their corresponding $c_{x,t}$ parameter is larger. Moreover, the process $\{\pi_{x,t}\}$ becomes strictly stationary if the parameters $\bc$ are all equal. That is, if $c_{x,t}=c$ for all $x$ and $t$, then the correlation for $x,x'>p$ and $t,t'>q$ simplifies to $$\Cor(\pi_{x,t},\pi_{x',t'})=\frac{(a+b)\{\#(\partial_{x,t}\cap\partial_{x',t'})\}c+(p+q+1)^2c^2}{\{a+b+(p+q+1)c\}^2},$$
where $\#(\partial_{x,t}\cap\partial_{x',t'})$ denotes the number of elements in the intersection set. 

Therefore, our bayesian nonparametric model is defined by the nonparametric sampling distribution \eqref{eq:sf} together with process $\bep_{p,q}(a,b,\bc)$, defined by \eqref{eq:xtbeta}, as nonparametric prior for $\bpi$. We now proceed to show how posterior inference is done with this model.

\section{Posterior analysis}
\label{sec:post}

Let us assume that, for each year $t\in\TT$, we observe a sample of size $n_t$, that is, $X_{1,t},X_{2,t},\ldots,$ $X_{n_t,t}$, with possible right censored observations. Let $\delta_{i,t}$ be the censored indicator that takes the value of $1$ if the observation is exact and zero if it is right-censored. Then, the joint distribution of the sample is $$f(\bx\mid\pi)=\prod_{t\in\TT}\prod_{i=1}^{n_t}f(x_{i,t}\mid\bpi)^{\delta_{i,t}}S(x_{i,t}\mid\bpi)^{1-\delta_{i,t}}.$$

Using expressions \eqref{eq:sf}, we can substitute the survival and density functions in terms of $\bpi$ to obtain 
$$f(\bx\mid\bpi)=\prod_{t\in\TT}\prod_{x\in\XX}\pi_{x,t}^{r_{x,t}}(1-\pi_{x,t})^{m_{x,t}-r_{x,t}},$$
where $r_{x,t}=\sum_{i=1}^{n_t}I(X_{i,t}=x,\,\delta_{i,t}=1)$ is the number of people dying and $m_{x,t}=\sum_{i=1}^{n_t}I(X_{i,t}\geq x)$ is the number of people at risk, at age $x$ and time $t$, respectively.

The prior distribution for $\bpi$ is characterised by \eqref{eq:xtbeta} and its marginal distribution could be obtained via integration with respect to the latent variables. However, inference is simpler if we consider the joint prior distribution for the whole vector that includes the latent variables $\bupsilon$ and $\omega$. This has the form $$\hspace{-6cm}f(\bpi,\bupsilon,\omega)=\be(\omega\mid a,b)\prod_{x\in\XX}\prod_{t\in\TT}\bin(\upsilon_{x,t}\mid c_{x,t},\omega)$$ $$\hspace{6cm}\times\be\left(\pi_{x,t}\left|a+\sum_{(z,s)\in\partial_{x,t}}\upsilon_{z,s}\,,\:b+\sum_{(z,s)\in\partial_{x,t}}(c_{z,s}-\upsilon_{z,s})\right.\right),$$
where the distributions with an argument upfront denote the corresponding density function. 

The posterior distribution for the extended vector of parameters is simply $f(\bpi,\bupsilon,\omega\mid\bx)\propto f(x\mid\bpi)f(\bpi,\bupsilon,\omega)$. To obtain summaries from it we suggest to implement a Gibbs sampler \citep{smith&roberts:93} with the following posterior conditional distributions.

\begin{enumerate}
\item[i.] 
Conditional distribution of the latent $\omega$ 
$$f(\omega|\bupsilon)=\be\left(\omega\left|\,a+\sum_{x\in\XX}\sum_{t\in\TT}\upsilon_{x,t}\,,\,b+\sum_{x\in\XX}\sum_{t\in\TT}(c_{x,t}-\upsilon_{x,t})\right.\right).$$
\item[ii.]
Conditional distribution of the latent variables $\upsilon_{x,t}$, for $x\in\XX$ and $t\in\TT$
$$f(\upsilon_{x,t}|\bpi,\omega)\propto\frac{{c_{x,t}\choose\upsilon_{x,t}}\left\{\left(\frac{\omega}{1-\omega}\right)\prod_{(z,s)\in\varrho_{x,t}}\left(\frac{\pi_{z,s}}{1-\pi_{z,s}}\right)\right\}^{\upsilon_{x,t}}I(\upsilon_{x,t}\in\{0,1,\ldots,c_{x,t}\})}{\prod_{(z,s)\in\varrho_{x,t}}\Gamma\left(a+\sum_{(y,u)\in\partial_{z,s}}\upsilon_{y,u}\right)\Gamma\left(b+\sum_{(y,u)\in\partial_{z,s}}\left(c_{y,u}-\upsilon_{y,u}\right)\right)}.$$ 
\item[iii.]
Conditional distribution of the parameters $\pi_{x,t}$, for $x\in\XX$ and $t\in\TT$
$$f(\pi_{x,t}\mid\bupsilon,\bx)=\be\left(\pi_{x,t}\left|\,a+\sum_{(z,s)\in\partial_{x,t}}\upsilon_{z,s}+r_{x,t}\,,\:b+\sum_{(z,s)\in\partial_{x,t}}(c_{z,s}-\upsilon_{z,s})+m_{x,t}-r_{x,t}\right.\right).$$
\end{enumerate}

Sampling from conditional distributions (i) and (iii) is straightforward, since they have standard form (a beta distribution). Sampling from (ii) is not complicated due to the discreteness of the density in a bounded support, so it can be evaluated and normalised. 

Although we do not have a complete analytical expression for the posterior distribution of the hazard rate parameters $\pi_{x,t}$, we can have an idea how the data and the prior dependence come to play. Taking iterative expectation from (iii) we have that the posterior mean of $\pi_{x,t}$ has the form 
\begin{equation}
\label{eq:postm}
\E(\pi_{x,t}\mid\bx)=\frac{a+\sum_{(z,s)\in\partial_{x,t}}\E(\upsilon_{z,s}\mid\bx)+r_{x,t}}{a+b+\sum_{(z,s)\in\partial_{x,t}}c_{z,s}+m_{x,t}}. 
\end{equation}
Recalling that if $c_{x,t}=0$ for all $x$ and $t$ this implies that $\upsilon_{x,t}=0$ w.p.1, so regardless of the data, $\E(\upsilon_{x,t}\mid\bx)=0$. If we further take $a$ and $b$ close to zero, which would correspond to a vague (non-informative) prior, then \eqref{eq:postm} reduces to the frequentist non-parametric estimate of the hazard rate $r_{x,t}/m_{x,t}$. Here ``vague'' is interpreted as a prior with large variance, however, other non-informative priors include Jeffrey's prior ($a=b=0.5$) or the uniform prior ($a=b=1$) \citep[e.g.][]{berger:85}. For any $c_{x,t}>0$, we know that $0\leq\E(\upsilon_{x,t}\mid\bx)\leq c_{x,t}$, so $\sum_{(z,s)\in\partial_{x,t}}\E(\upsilon_{z,s}\mid\bx)$ in \eqref{eq:postm} can be seen as extra people dying and $\sum_{(z,s)\in\partial_{x,t}}c_{z,s}$ would be additional people at risk, at age $x$ and time $t$. So ages-times $(x,t)$ more influential in their neighbours would have larger $\upsilon_{x,t}$ values.

\section{Numerical studies}
\label{sec:numerical}

\subsection{Simulation study}

In this first study we test our model in a controlled scenario with the presence of censoring. Let $Y_t\sim\po(\lambda_t)$ be a poisson random variable with mean $\lambda_t$. We define $Z_t=Y_t+1$ such that $Z_t$ is a strictly positive random variable, and denote it as $Z_t\sim\po^+(\lambda_t)$. Let $C_t$ be a censoring age such that $C_t\sim\po^+(\lambda_C)$. We simulated data points $Z_{i,t}$ and censoring ages $C_{i,t}$ independently, and define observations $X_{i,t}=\min(Z_{i,t},C_{i,t})$ and right-censoring indicators $\delta_{i,t}=I(Z_{i,t}\leq C_{i,t})$, for $i=1,\ldots,n_t$ and $t=1,\ldots,N_T$. 

We took $\lambda_t=8+(t-1)/2$, which implies that the hazard rate $h_t(x)$ has an increasing trend in age $x$, for a fixed time $t$, and a decreasing trend in time $t$, for a fixed age $x$. Figure \ref{fig:trueh} shows these hazard rates, where the intensity of the colour increases for larger times $t$. Sample sizes are $n_t=1,000$ for $t=1,\ldots,N_T$ and $N_T=15$. For the censoring rate we took $\lambda_C=18$, and since this is fixed across time $t$, the percentage of censoring values increases for larger $t$. Specifically, they go from very light censoring ($2.89\%$) for $t=1$ to moderate censoring ($33.08\%$) for $t=15$. Given the possible simulated values, and to avoid numerical problems we took a maximum age $N_X=18$ to define the hazards $\pi_{x,t}$. To study the prediction properties of our model, we decided to estimate the hazard rates for out of sample times $t=16,17$.

To specify the prior distribution for $\pi_{x,t}$ we took $a=b=0.001$ and assumed $c_{x,t}=c$ with a range of values for $c\in\{2,5,10,20\}$ to compare. For the order of dependence we took $p=1,\ldots,10$ and $q=1,\ldots,10$ to explore. These values imply a total of $4\times 10\times 10=400$ different model specifications. For each of them we implemented a Gibbs sampler with conditional distributions $(i)$, $(ii)$ and $(iii)$, and ran it for $18,000$ iterations with a burn in of $6,000$ and keeping one of every $3^{rd}$ iteration. Convergence of the chains was assessed informally by looking at the trace plots, ergodic means and autocorrelation functions. Figure \ref{fig:mcmc} shows these convergence diagnostics for parameter $\omega$ when $p=10$, $q=8$ and $c=5$. 

To assess model fit we computed the L-measure which is a goodness of fit statistic that summarises variance and mean square error (bias) of the posterior distribution of each $\pi_{x,t}$. This is defined as \citep{laud&ibrahim:95}
\begin{equation}
\label{eq:lmeasure}
L(\nu)=\frac{1}{N_X N_T}\sum_{x=1}^{N_X}\sum_{t=1}^{N_T}\V\left(\pi_{x,t}\mid\bx\right)+\frac{\nu}{N_X N_T}\sum_{x=1}^{N_X}\sum_{t=1}^{N_T}\left\{\E\left(\pi_{x,t}\mid\bx\right)-\pi_{x,t}^{0}\right\}^2,
\end{equation}
where $\pi_{x,t}^0$ denotes the true hazard from the positive poisson model $\po^+(\lambda_t)$, which does not have close analytical expression but can be evaluated numerically, and $\nu\in[0,1]$ is a constant that determines the importance of the bias in the overall measure. To assess forecasting performance we computed the L-measure \eqref{eq:lmeasure} but considered, in the second summation, the range of values $t=N_t+1,\ldots,N_T+N$, which in this case $N=2$. Note that $\pi_{x,t}^0$ is also available for this range of values of $t$.

Figure \ref{fig:lmeaS} shows the values of the L-measure, with $\nu=1/2$, for the different prior specifications and for the observed range of times (in sample) and predictions (out of sample). In sample L-measure decreases for larger $p$ and larger $q$, with little gain for $q\geq 8$. The measure also takes smaller values for $c=5$. Therefore, we select our best fitting model that with $p=10$, $q=8$ and $c=5$. To place our model in context, the L-measure for the model that ignores dependence, obtained with $c_{x,t}=0$ for all $x$ and $t$, is 0.0023, which is a lot higher than the values for most of our models. On the other hand, the out of sample L-measure (bottom panel) prefers smaller values of $p$, larger values of $q$ and larger values of $c$. In general there is little gain for values $c\geq 10$. Therefore our best forecasting model is that with $p=1$, $q=10$ and $c=10$. 

We summarise posterior distributions of the hazard rates $\pi_{x,t}$ with the mean, as a point estimate, together with $2.5\%$ and $97.5\%$ quantiles, to define $95\%$ credible intervals. Figure \ref{fig:posth} includes posterior estimates for the first two times $t=1,2$ and for the out of sample times $t=16,17$. For each time we show two estimates, one with independent priors obtained by setting $c=0$ (left panel), and another with our best model (right panel). For all times, $95\%$ credible intervals are a lot wider with independent priors than with our best dependent model. Posterior estimates obtained with our best models correct the lack of data by producing more precise estimates and are able to accurately forecast the hazard rates for future times.

\subsection{Real data analysis}

The human mortality database \citep{hmd:21} is a project conducted by the Department of Demography at the University of California in Berkeley, USA, and by the Max Planck Institute for Demographic Research in Rostock, Germany. They provide death counts and population sizes for different countries needed to construct life tables. One of the most complete datasets is that of Switzerland, which contains records from 1876 up to 2018. We decided to constrain the years of study to the window from 1925 to 2014 and formed 18 quinquennial groups: $[1925,1929],[1930,1934],\ldots,[2010,2014]$. The ages of death were also constrained from 0 to 99 and formed 20 quinquennial groups: $[0,4],[5,9],\ldots,[95,99]$. In both types of groups the deaths and the population were aggregated. The only consequence of doing this is that estimated hazard rates for the quinquennial group would be an average of the hazard rates for individual ages/years. 

In notation of Section \ref{sec:post}, available data are death counts at age $x$ and year $t$, which correspond to $r_{x,t}$ in a scenario with no censoring, and population sizes at age $x$ and year $t$, $Q_{x,t}$. We recover the number of people at risk by $m_{x,t}=\sum_{y=x}^{N_X}Q_{y,t}$. Frequentist nonparametric estimates of the hazard rates are $\widehat\pi_{x,t}=r_{x,t}/m_{x,t}$, for $x=1,\ldots,N_X$ and $t=1,\ldots,N_T$ with $N_X=20$ and $N_T=18$. These are shown in Figure \ref{fig:swiss0}, where the intensity of the colour increases for larger $t$ (more recent dates). As we can see, the mortality rates are monotonically increasing along age, and shown a slow but steady decreasing path along time. The latter is a consequence of a larger life expectancy in developed countries like Switzerland. The small gap that can be appreciated in the last ages occurs in the 70's, that is, between groups $[1970,1974]$ and $[1975,1979]$. 

Having a closer look at the posterior conditional distribution (iii) in Section \ref{sec:post}, we note that observed data, summarised in sufficient statistics $r_{x,t}$ and $m_{x,t}$, enter the expression additively to the latent variables $\upsilon_{x,t}$ and $c_{x,t}-\upsilon_{x,t}$. Therefore, to define the prior distributions, the values of $c_{x,t}$ have to be of the same order of $r_{x,t}$. Moreover, to avoid numerical problems we decided to divide both, the number of deaths and the population sizes, by a constant $k=10$, which has no implications in the hazard rate estimates.

We fitted our bayesian nonparametric dependent model with the following prior specifications: $a=b=0.001$ and assumed $c_{x,t}=c$ with a range of values for $c\in\{5,20,50,100\}$ to compare. Due the point just raised,  these chosen values for $c$ are a lot higher than those considered in the simulation study. For the order of dependence we took $p=1,2,3$ and $q=1,\ldots,10$ to explore.

We fitted our model for the Switzerland dataset for $t=1,\ldots,17$ and decided to leave out the last quinquennial time group, $t=18$, for out of sample comparison. We implemented a Gibbs sampler with $10,000$ iterations with a burn in of $1,000$ and keeping one of every second iteration. Again, convergence of the chain was assessed informally with good convergence behaviour for all chains. As a goodness of fit for the different prior choices we also used the L-measure but computed it for two cases: using in sample estimations for $t=1,\ldots,17$; and out of sample estimations for $t=18$. Since we do not know the true hazard rates $\pi_{x,t}^0$, needed in \eqref{eq:lmeasure}, we use the frequentist nonparametric estimates $\widehat\pi_{x,t}$ instead. 

Figure \ref{fig:lmeaR} summarises both L-measures with $\nu=1/2$. For both cases, in sample and out of sample, smaller values of $p$ and larger values of $q$ are preferred. On the other hand, there is an opposite preference in the parameter $c$. For in sample estimation, smaller $c$ is preferred, whereas for out of sample larger values of $c$ are preferred. Looking closer to the out of sample L-measures, they are practically the same for $c$ values in $\{20,50,100\}$ for $p\geq 6$. Making a compromise between the two sets of measures, we take as our best model that with $p=1$, $q=10$ and $c=20$. 

Finally, in Figure \ref{fig:swiss1} we show posterior estimates for some times. Again, we take the posterior mean and quantiles $2.5\%$ and $97.5\%$ of $\pi_{x,t}$ to define point and $95\%$ credible interval estimates. In the top row we show estimates for $t=1$ and compare independence prior (left) with our best dependent model fit (right). Little difference can be seen in both fits, perhaps a smaller variance in the dependence model for ages $x\ge 19$. In the bottom row we show out of sample estimates for $t=18$ and compare estimates with $p=1$, $q=10$ and two values of $c$, $c=5$ (left) and $c=20$ (right). We can appreciate the reduction in the length of the credible intervals when increasing the value of $c$ from $5$ to $20$, however in both cases the out of sample estimations are reasonably good.  

To place our model in context, we also fitted \cite{lee&carter:92}'s model (LC) described in Section \ref{sec:intro}. For the errors we assumed $\epsilon_{x,t}\sim\no(0,\tau_\epsilon)$, that is a zero mean normal distribution with precision $\tau_\epsilon$. We performed a bayesian analysis and took independent prior distributions $a_x\sim\no(0,0.1)$,  $b_x\sim\no(0,0.1)$, and a dynamic linear prior $k_t\sim\no(k_{t-1},\tau_k)$ for $t=2,\ldots,N_T$ with $k_1\sim\no(0,\tau_k)$. Finally for the precisions we took $\tau_\epsilon\sim\ga(0.1,0.1)$ and $\tau_k\sim\ga(0.1,0.1)$. We also assessed forecasting performance of the models by considering the following scenario: Fitted the model for $t=1,\ldots,N_T-N$ and estimated $N$ future hazard rates for $t=N_T-N+1,\ldots,N_T$; we took values $N=1,\ldots,10$. Posterior inference of both models was obtained by running a Gibbs sampler with the same specifications as above. We did our own implementation of LC model in jags \citep{plummer:03}. 

The L-measures, in sample and out of sample, for our model with $p=1$, $q=10$ and $c=20$, and for the LC model are presented in Table \ref{tab:gof}. The in sample L-measures are somehow stable in both models for the different values of $N$, with an slight increase for our model and a slight decrease for the LC model, for larger $N$. In general, the fitting of our model is better than LC model. For comparing the forecasting performance (out of sample L-measures) between the two models, we have reported the ratio of the two L-measures (R$_{\rm out}$) in the last column of Table \ref{tab:gof}. Here something interesting happens. LC model is better than our model for $N\leq 4$, but for $N>4$ our model shows better forecasting behaviour than the LC model. 

In summary, our model has shown a better fitting than the LC model and our model outperforms LC model in long term forecasting, for the particular dataset analysed here.

\section{Concluding Remarks}
\label{sec:conclusion}

We have proposed a bayesian nonparametric model for the study of dynamic life tables. Prior distributions for the hazard rates is based on a stochastic process that allows for dependence in the hazards across ages and times. The order of dependence is controlled by parameters $p$ and $q$ in each dimension, respectively, and the strength of dependence is controlled by a set of parameters $\bc=\{c_{x,t}\}$. In contrast to other proposals, our model admits the presence of right censored observations, which are not uncommon in life tables estimation. 

Posterior inference in our model relies in the implementation of a Gibbs sampler, which for this paper was implemented in Fortran. Computational times do not increase with sample size, they only depend on the number of ages, $N_X$, and the number of times, $N_T$, under study. For the sizes considered in the numerical examples and for $18,000$ iterations the code takes less than 2 minutes to run. 

Our model is unique in the sense that it can not only estimate future hazard functions, but it can improve hazard rates estimation for observed ages and times, specially in the ages where few deaths are observed or where censored observations are present. Our model forecasting performance improves for longer times ahead. 

The inclusion of covariates to define sub-populations, like sex, states or countries, might also be of interest. One way of dealing with this is to consider a multiplicative hazard model, similar to \cite{cox:72}, but ensuring that the effect of covariates still defines a proper discrete hazard rate. If we denote by $\bZ_{it}$ the covariate vector of individual $i$ and by $\btheta$ a coefficients vector, then the hazard function for individual $i$ could be $h_{it}(x)=\varphi(\bZ_{it},\btheta)h_{0t}(x)$, with $\varphi(\bZ_{it},\btheta)=\exp(\bZ_{it}'\btheta)/\{1+\exp(\bZ_{it}'\btheta)\}$ and $h_{0t}(x)$ a discrete baseline hazard. Furthermore, the stochastic process $\bep_{p,q}(a,b,\bc)$, as in \eqref{eq:xtbeta}, could be used as a nonparametric prior distribution for the baseline hazard. Studying this and other alternative models to cope with covariates is left to study for a future work.

\section*{Acknowledgements}

The author acknowledges support from \textit{Asociaci\'on Mexicana de Cultura, A.C.} and is grateful to valuable comments from the reviewers.

\bibliographystyle{natbib}

\newpage 

\begin{table}
\centering
\begin{tabular}{cccccc}
\toprule
 & \multicolumn{2}{c}{$\bep$} & \multicolumn{2}{c}{LC} \\
$N$ & $L_{\rm in}$ & $L_{\rm out}$ & $L_{\rm in}$ & $L_{\rm out}$ & R$_{\rm out}$ \\
\midrule
$1$ & $0.000099$ & $0.000753$ & $0.000387$ & $0.000344$ & $2.18$ \\
$2$ & $0.000102$ & $0.000879$ & $0.000417$ & $0.000499$ & $1.76$ \\
$3$ & $0.000104$ & $0.001068$ & $0.000416$ & $0.000805$ & $1.32$ \\
$4$ & $0.000105$ & $0.001266$ & $0.000445$ & $0.001189$ & $1.06$ \\
$5$ & $0.000108$ & $0.001430$ & $0.000406$ & $0.001419$ & $1.00$ \\
$6$ & $0.000119$ & $0.001593$ & $0.000346$ & $0.001990$ & $0.80$ \\
$7$ & $0.000127$ & $0.001746$ & $0.000269$ & $0.184359$ & $0.01$ \\
$8$ & $0.000153$ & $0.001864$ & $0.000184$ & $0.735184$ & $0.00$ \\
$9$ & $0.000167$ & $0.002124$ & $0.000177$ & $1.214812$ & $0.00$ \\
$10$ & $0.000169$ & $0.002413$ & $0.000160$ & $8.422003$ & $0.00$ \\
\bottomrule
\end{tabular}
\caption{L-measures comparison at $\nu=1/2$ for $N=1,2,\ldots,10$ times left out. In sample ($L_{\rm in}$) and out of sample ($L_{\rm out}$), for our best fitting model $\bep(p=1,q=10,c=20)$ and LC model. Last column shows the ratio between the out of sample L-measures.}
\label{tab:gof}
\end{table}

\begin{figure}
\centerline{\includegraphics[scale=0.7]{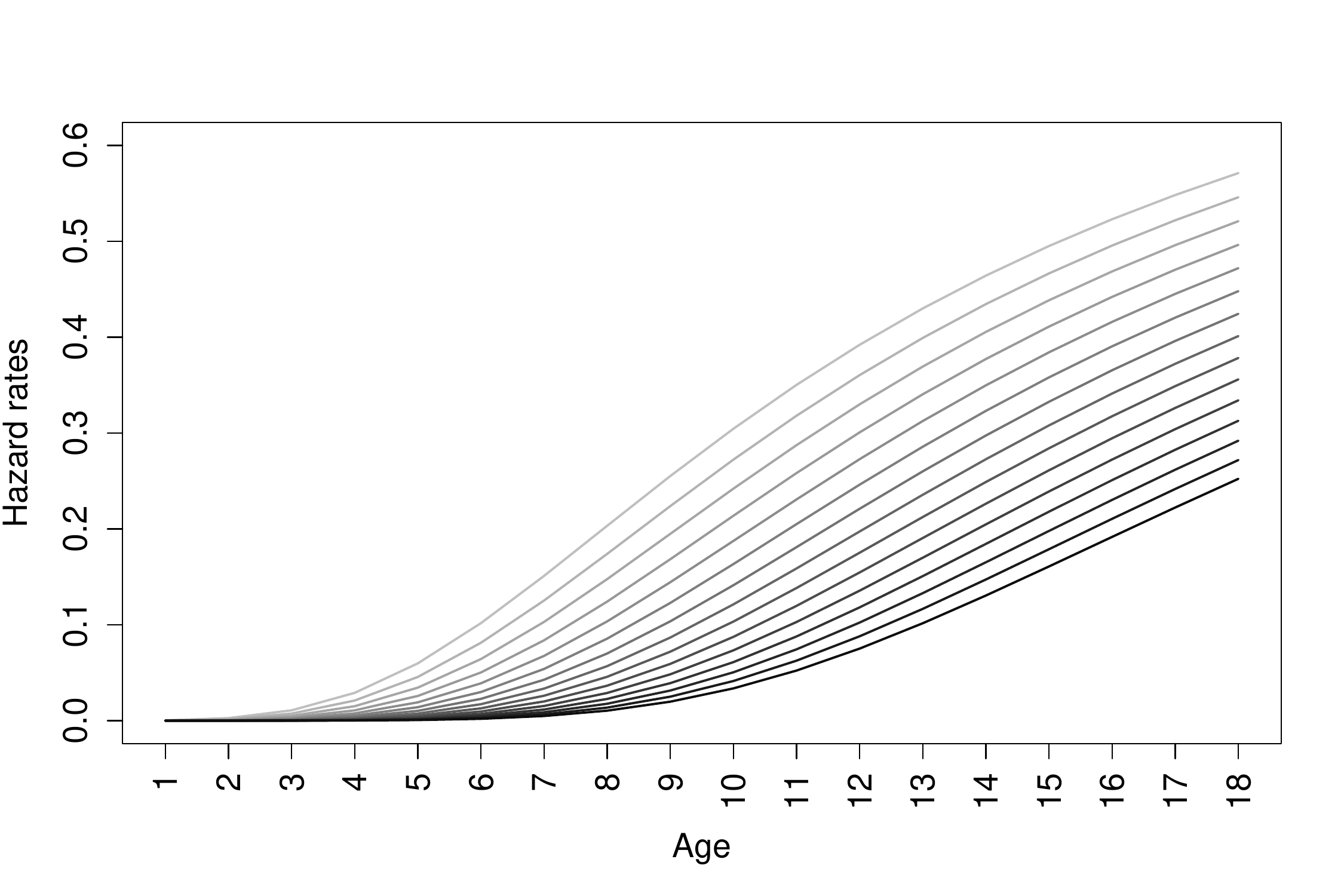}}
\caption{{\small True hazards $h_t(x)$ in simulated study for $t=1,\ldots,15$. Darker colour corresponds to larger $t$.}}
\label{fig:trueh}
\end{figure}

\begin{figure}
\centerline{\includegraphics[scale=0.7]{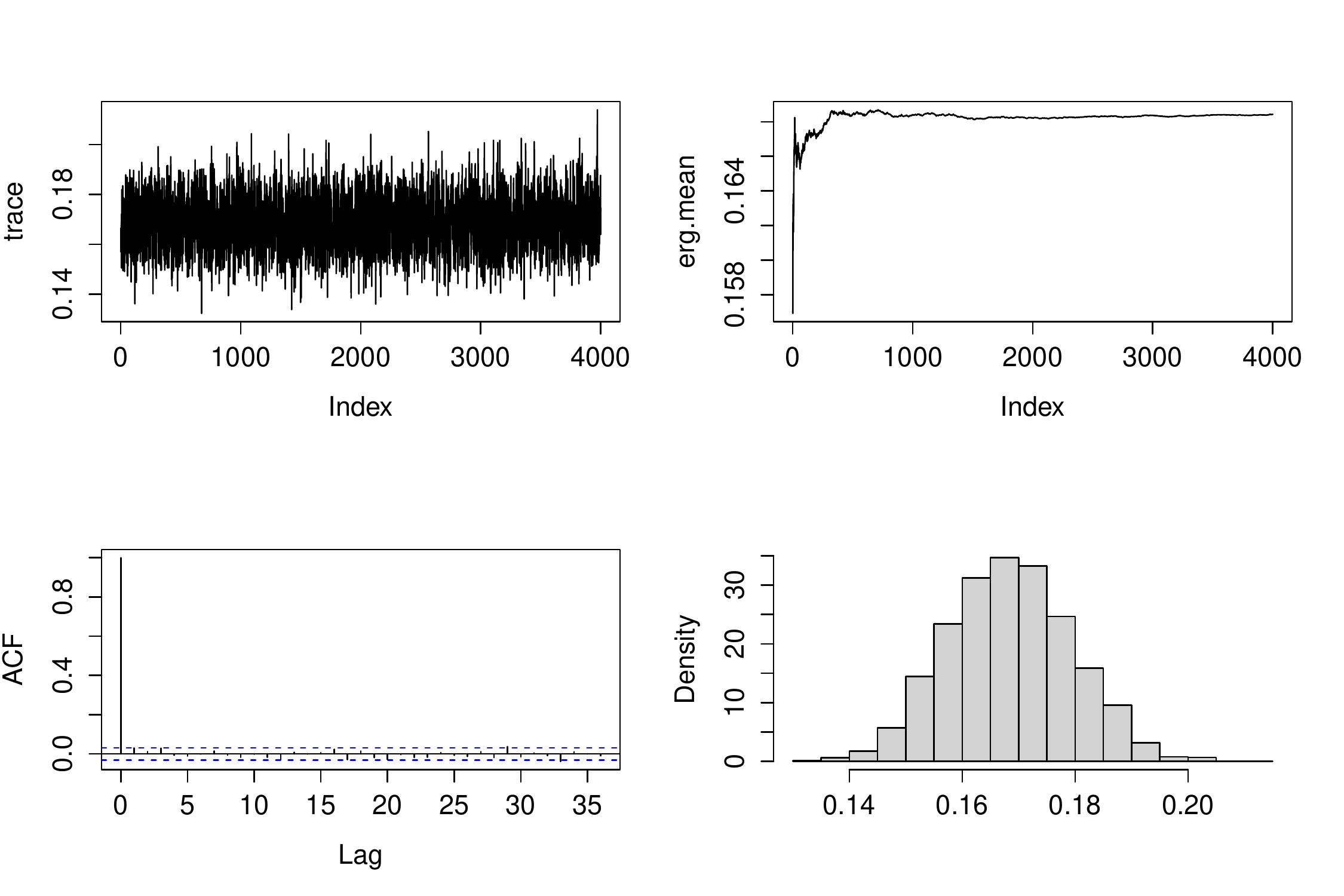}}
\caption{{\small MCMC convergence diagnostics for $\omega$ when $p=10$, $q=8$ and $c=5$. Trace plot, ergodic means, autocorrelation function and probability histogram.}}
\label{fig:mcmc}
\end{figure}

\begin{figure}
\centerline{\includegraphics[scale=0.6]{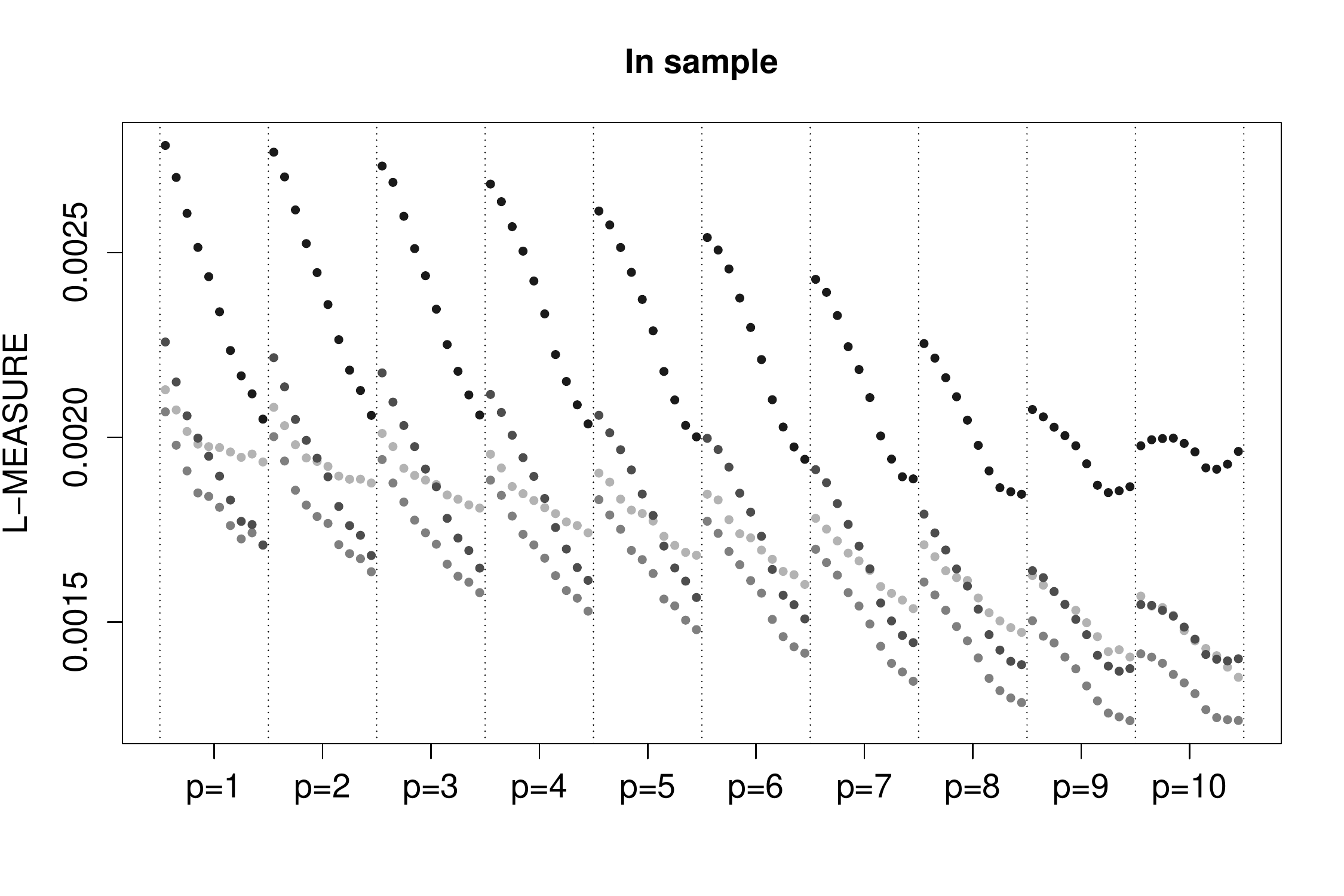}}
\centerline{\includegraphics[scale=0.6]{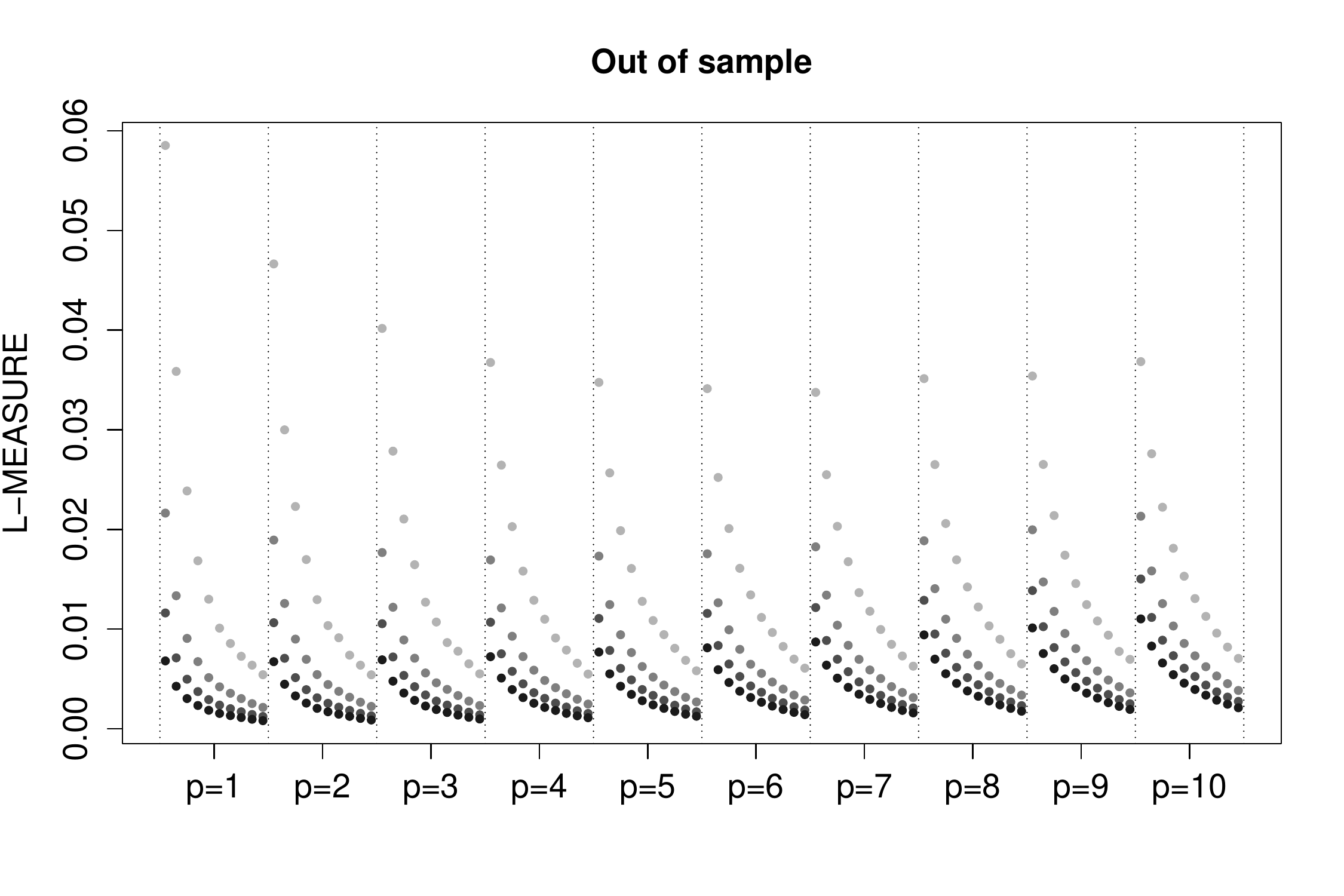}}
\caption{{\small L-measure gof statistics for different prior specifications. In sample (top), out of sample (bottom). $p=1,\ldots,10$; $q=1,\ldots,10$ (within each block); and $c=2,5,10,20$ in colour intensity from light to dark.}}
\label{fig:lmeaS}
\end{figure}

\begin{figure}
\centerline{\includegraphics[scale=0.35]{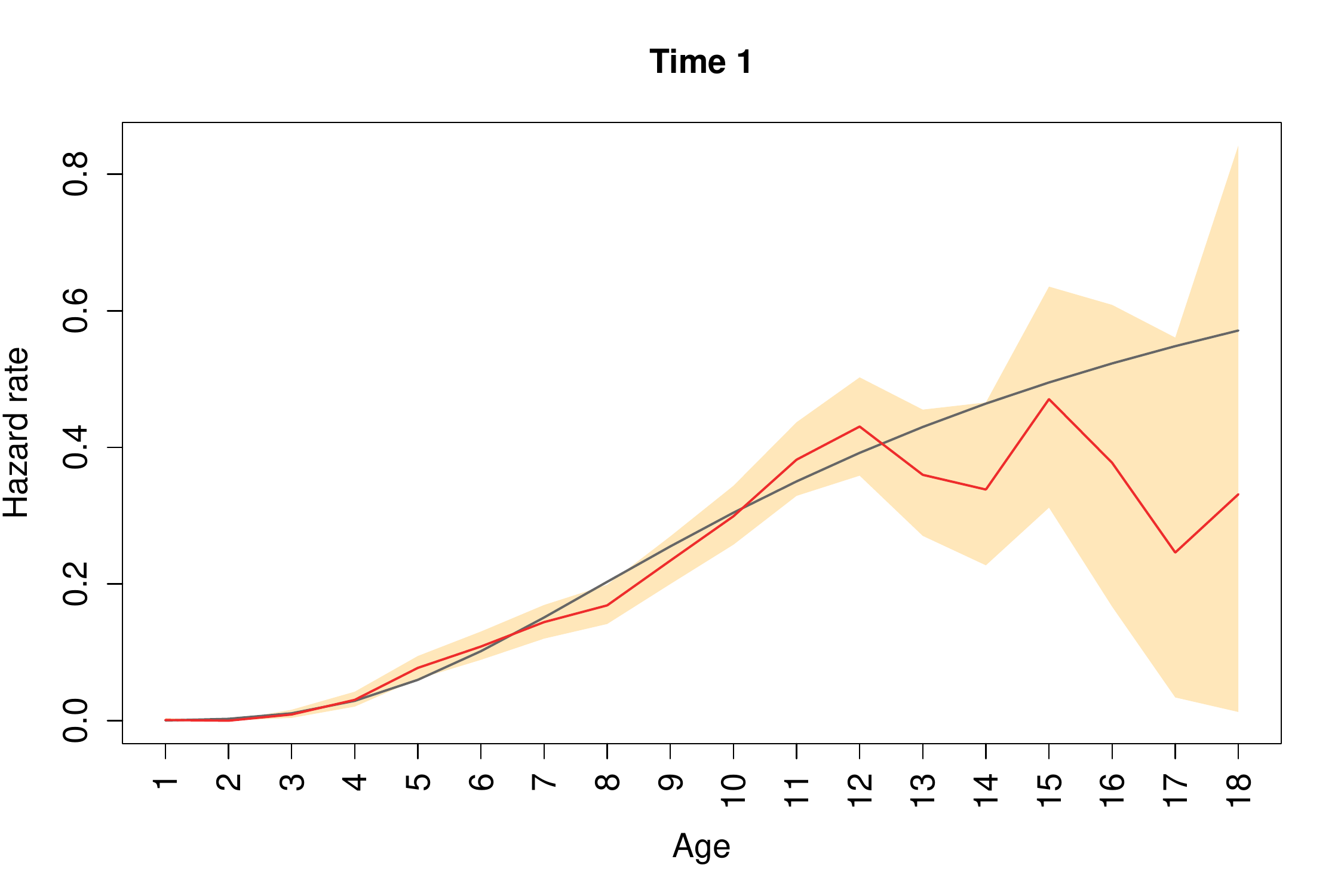}
\includegraphics[scale=0.35]{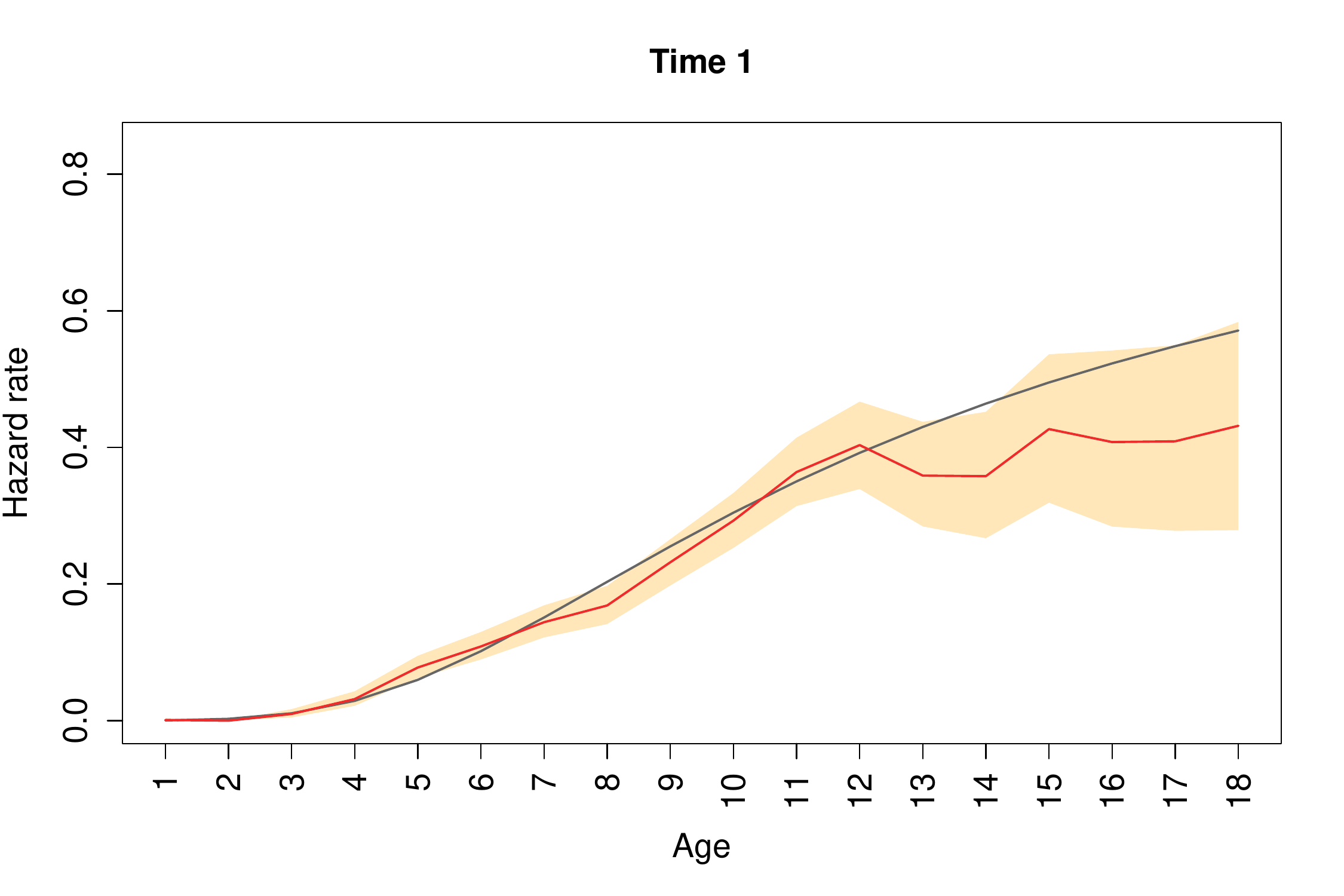}}
\centerline{\includegraphics[scale=0.35]{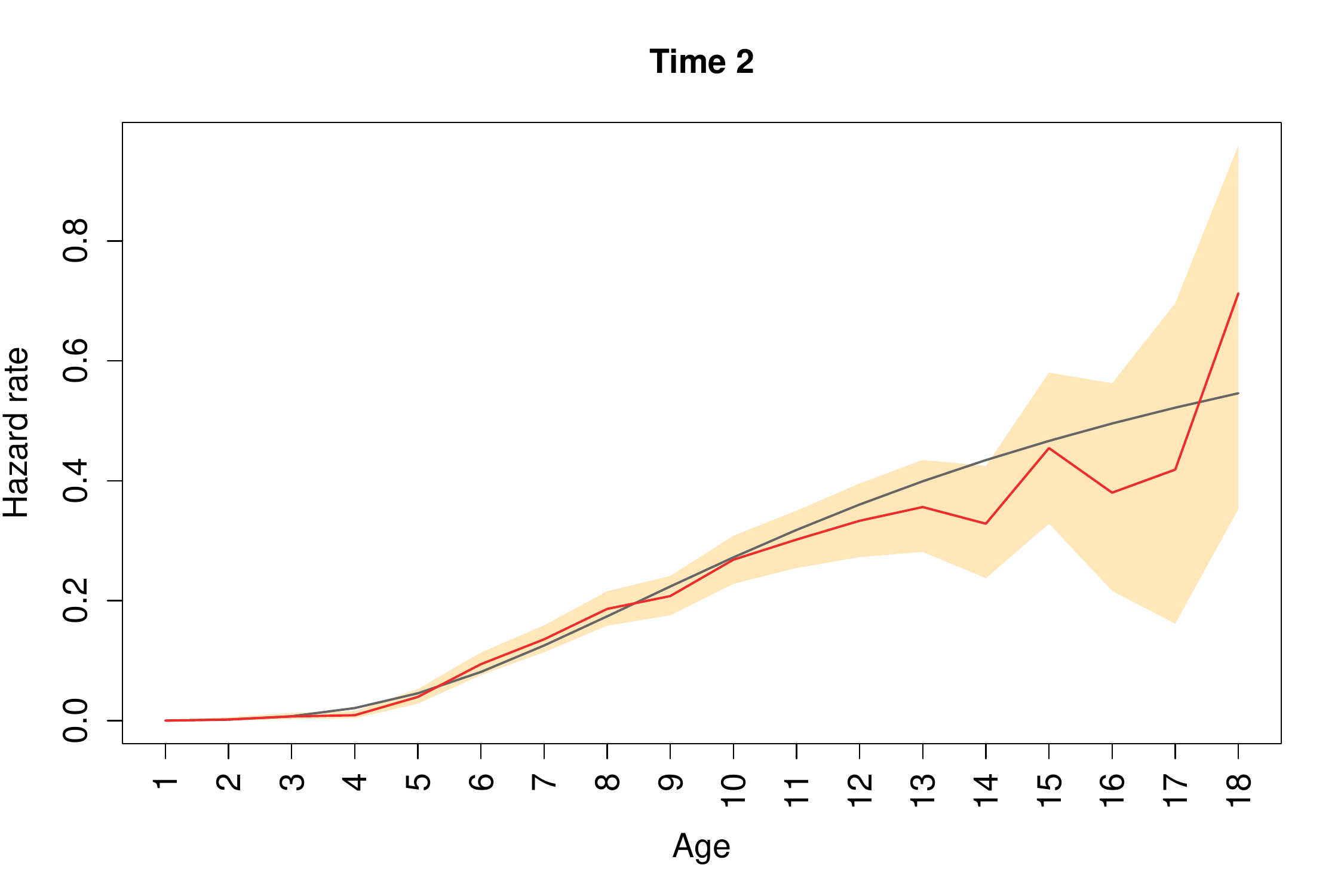}
\includegraphics[scale=0.35]{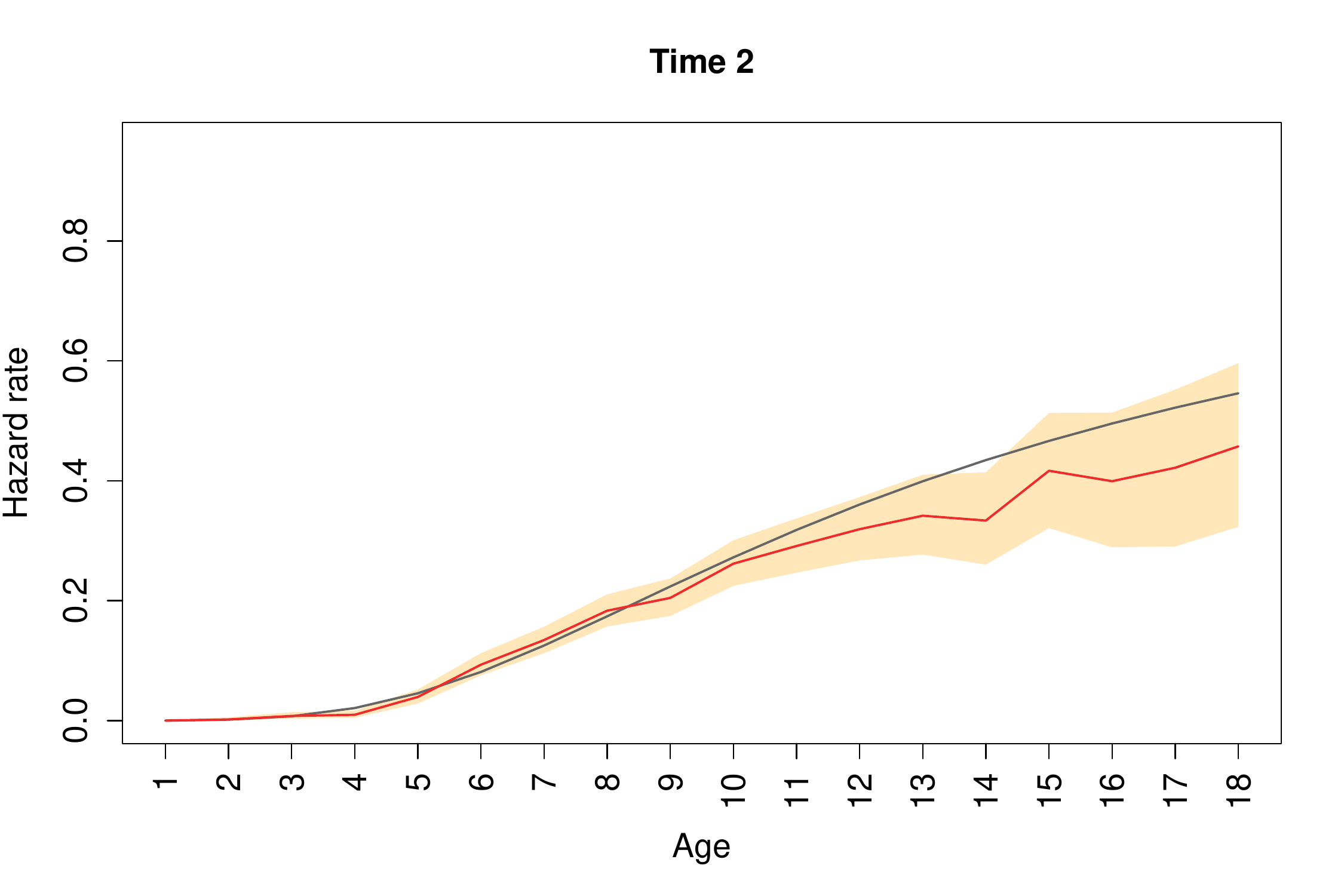}}
\centerline{\includegraphics[scale=0.35]{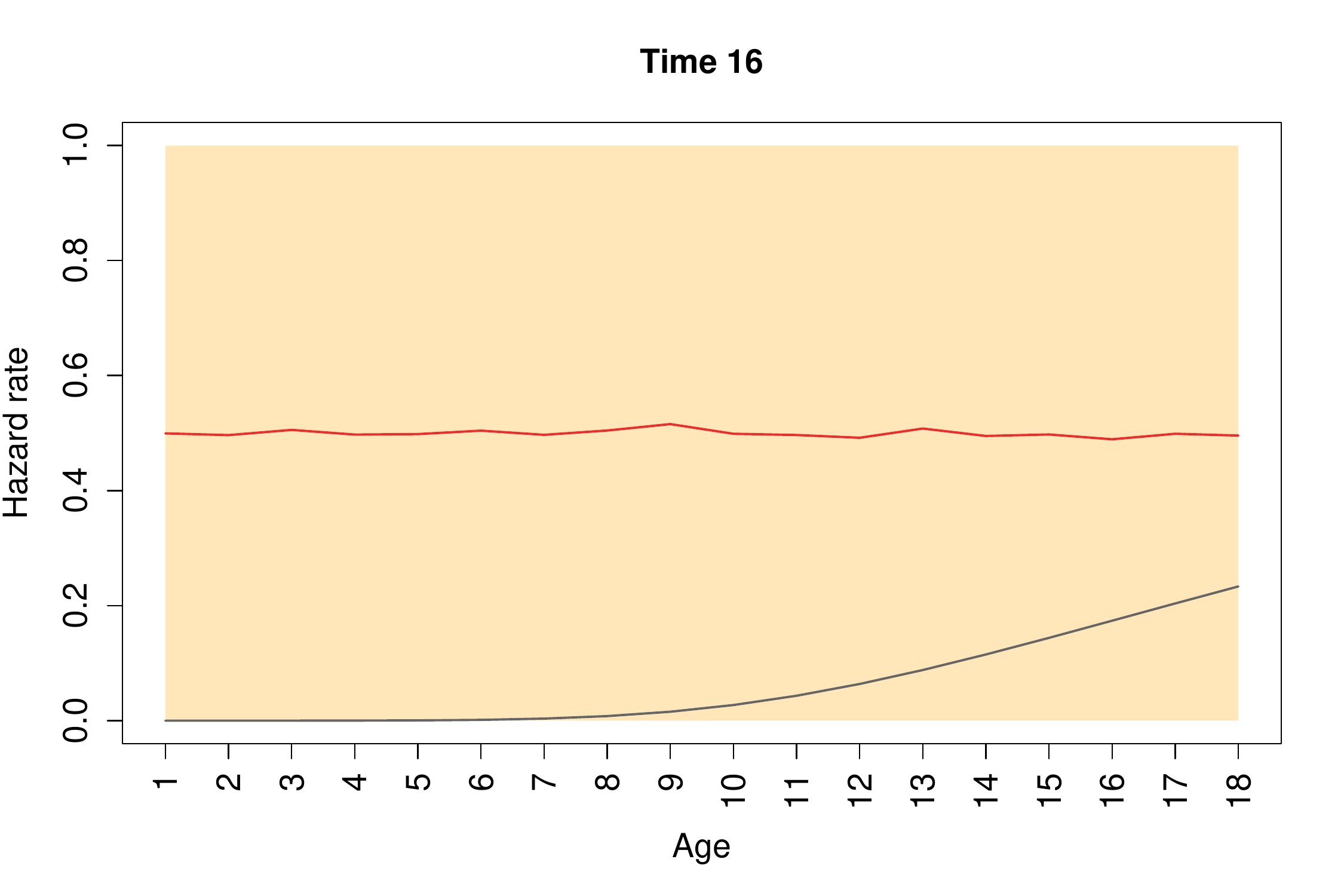}
\includegraphics[scale=0.35]{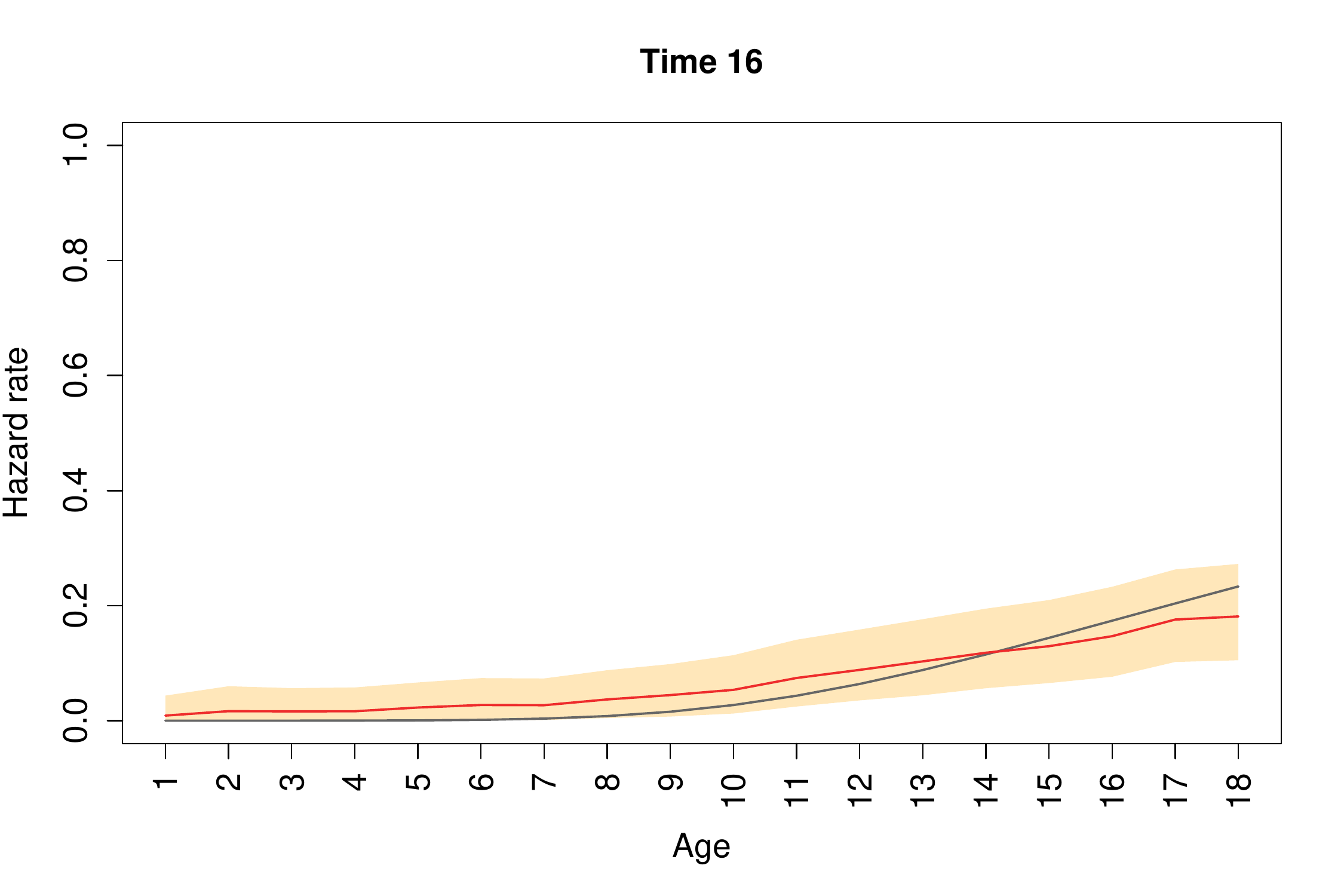}}
\centerline{\includegraphics[scale=0.35]{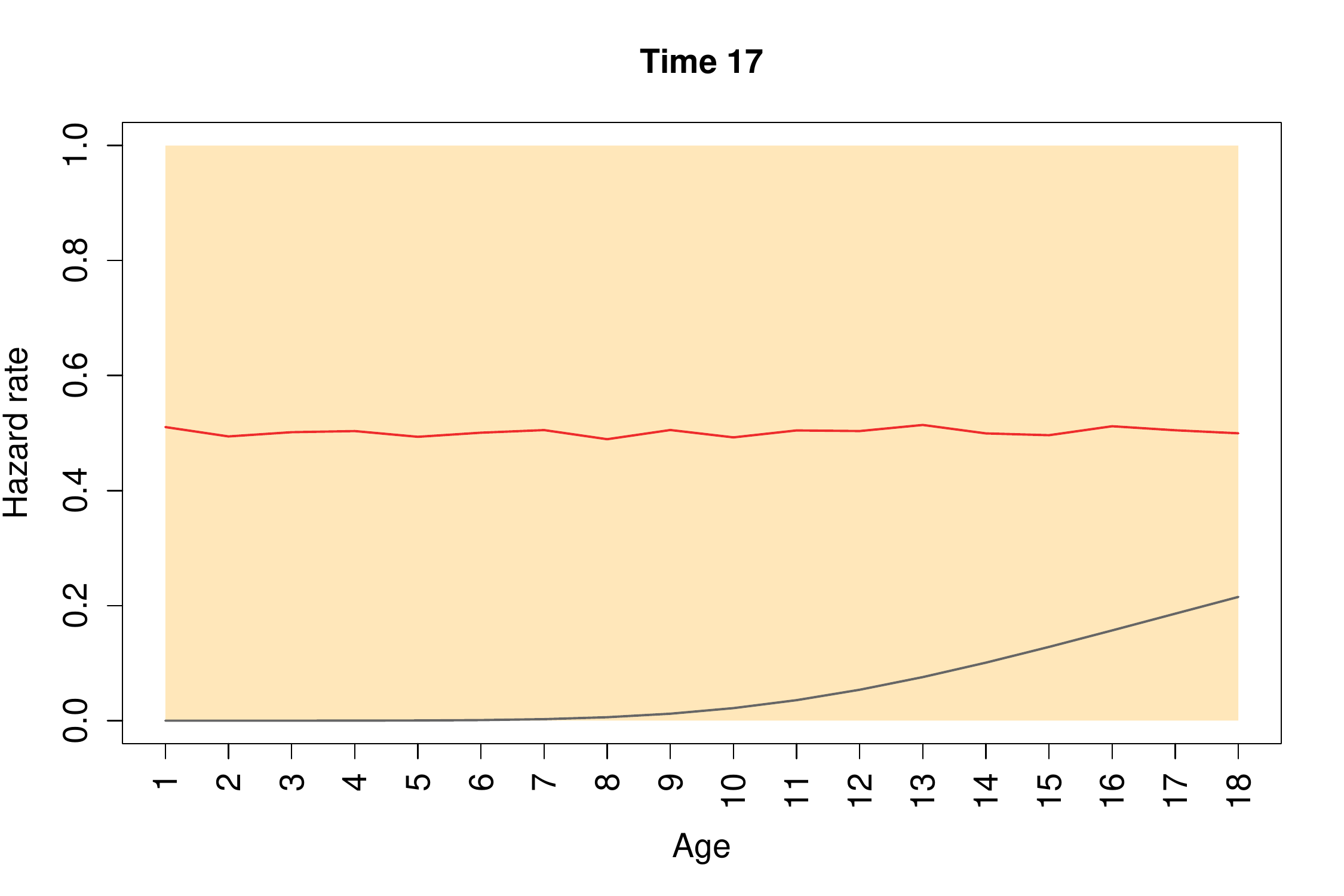}
\includegraphics[scale=0.35]{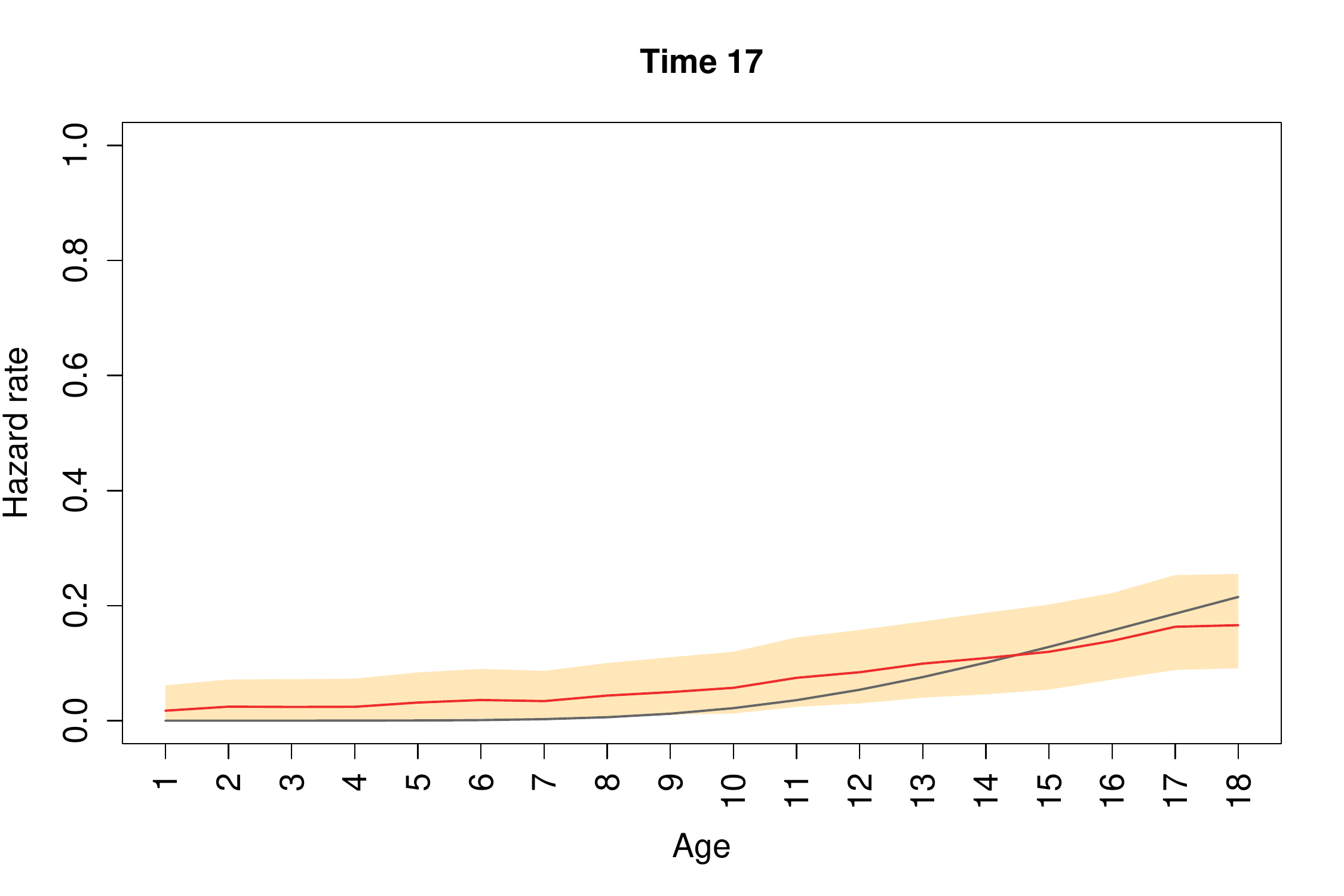}}
\caption{{\small Posterior estimates of hazard rates $h_t(x)$ for $t=1,2,16,17$. Independent prior with $c=0$ (left column), dependent prior (right column) with $p=10$, $q=8$ and $c=5$ (in sample) and with $p=1$, $q=10$ and $c=10$ (out of sample).}}
\label{fig:posth}
\end{figure}

\begin{figure}
\centerline{\includegraphics[scale=0.7]{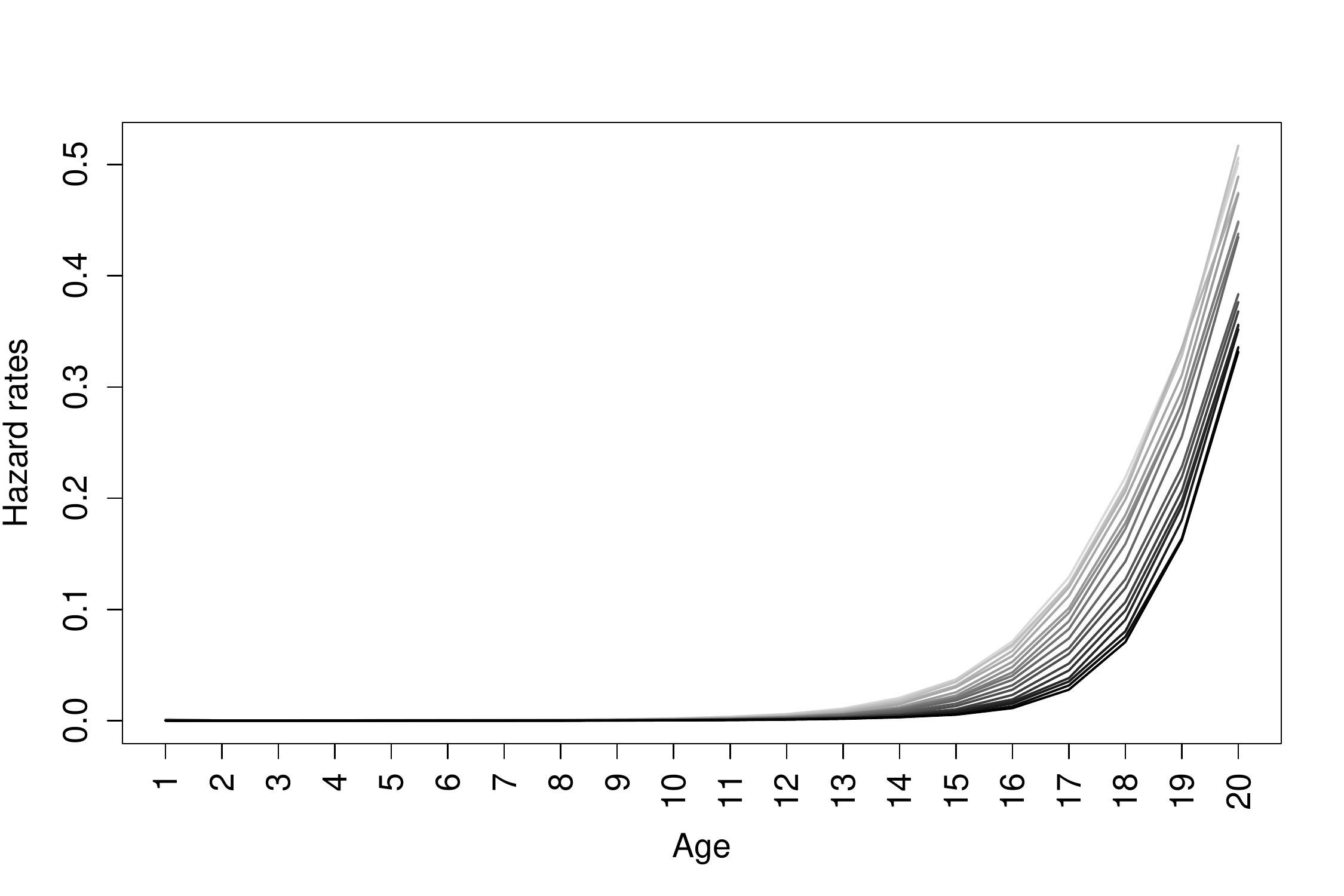}}
\caption{{\small Frequentist nonparametric estimates of hazard rates $h_t(x)$ for $t=1,\ldots,18$. Darker colour corresponds to larger $t$.}}
\label{fig:swiss0}
\end{figure}

\begin{figure}
\centerline{\includegraphics[scale=0.6]{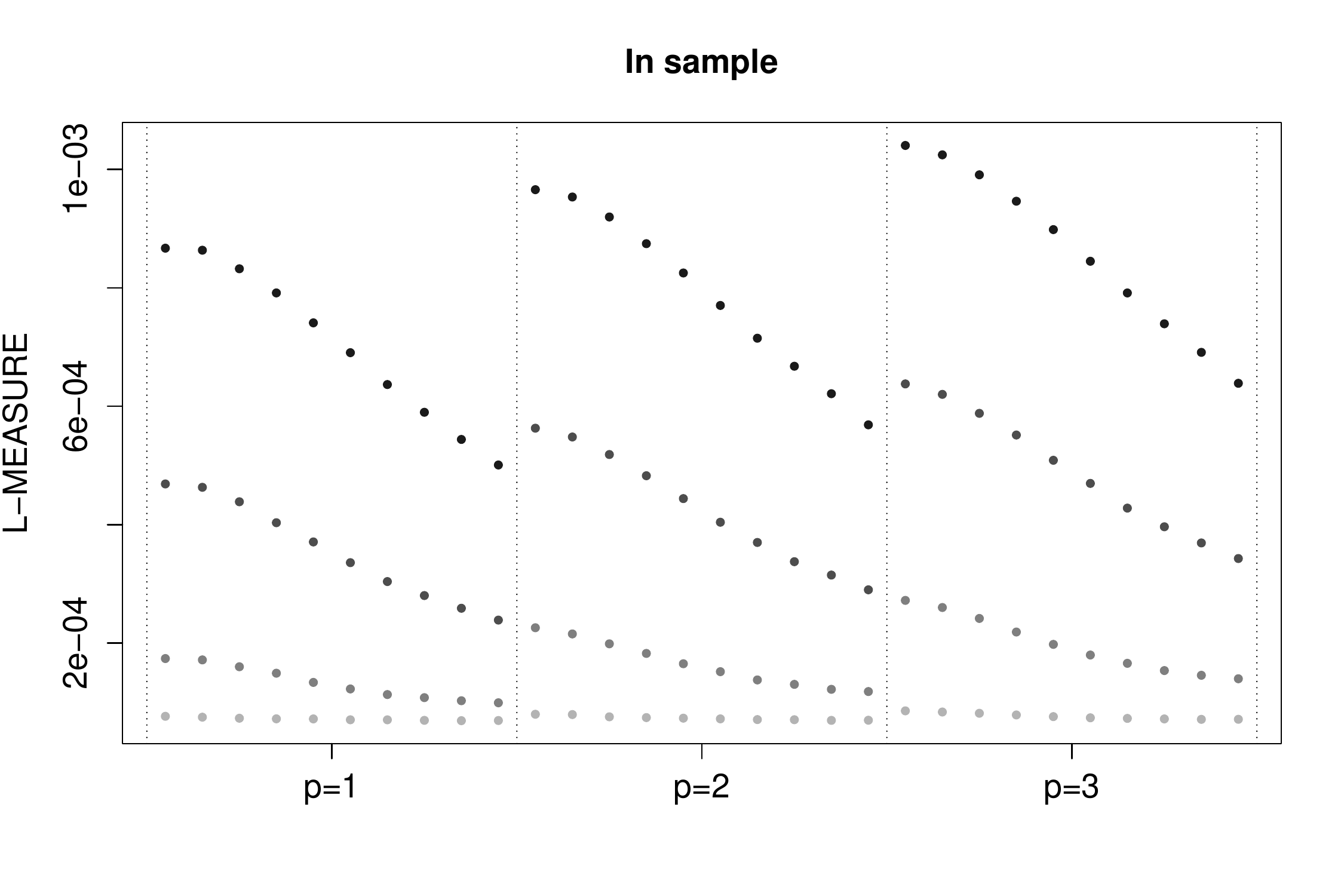}}
\centerline{\includegraphics[scale=0.6]{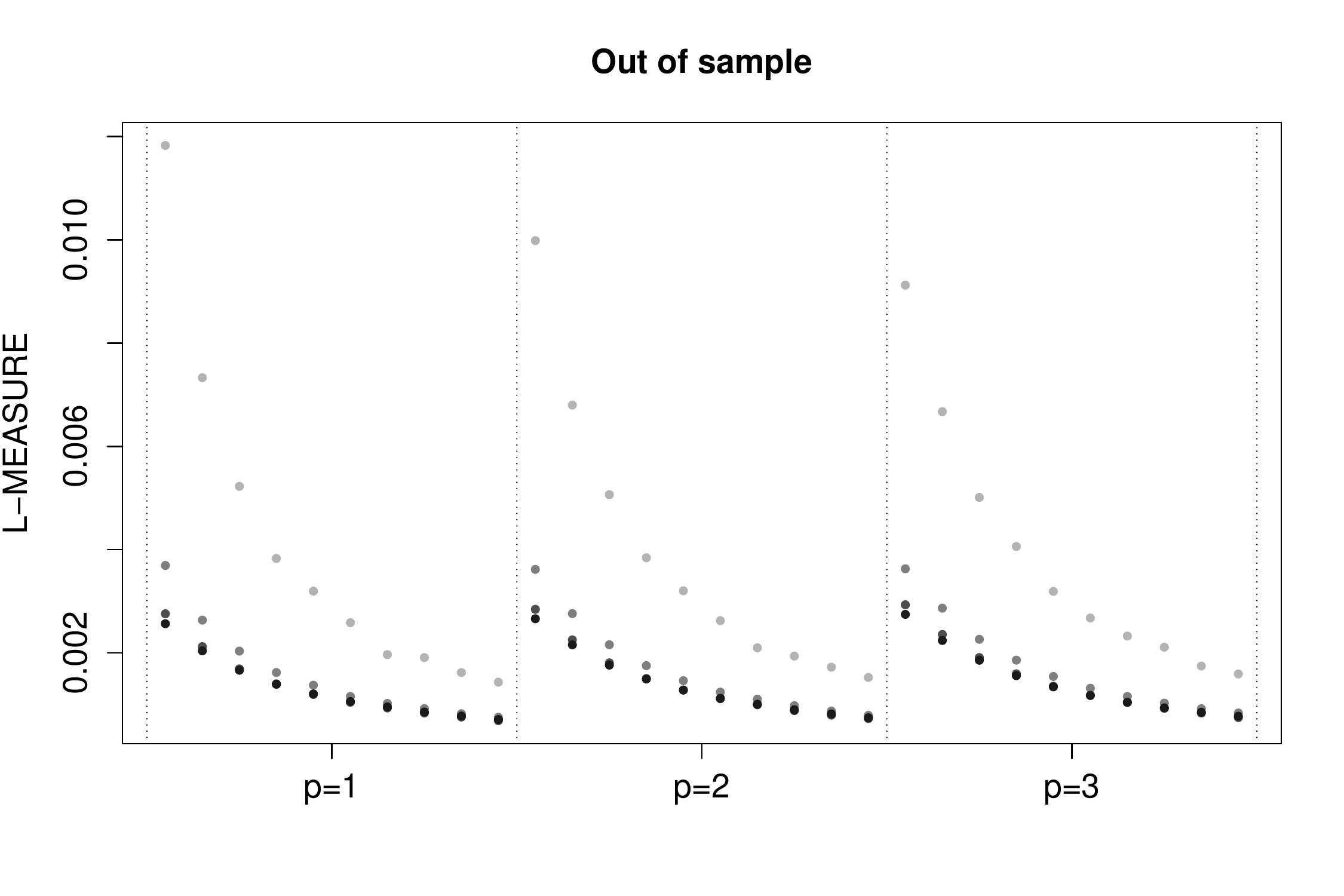}}
\caption{{\small L-measure gof statistics for different prior specifications. In sample (top), out of sample (bottom). $p=1,\ldots,3$; $q=1,\ldots,10$ (within each block); and $c=5,20,50,100$ (in colour intensity from light to dark).}}
\label{fig:lmeaR}
\end{figure}

\begin{figure}
\centerline{\includegraphics[scale=0.35]{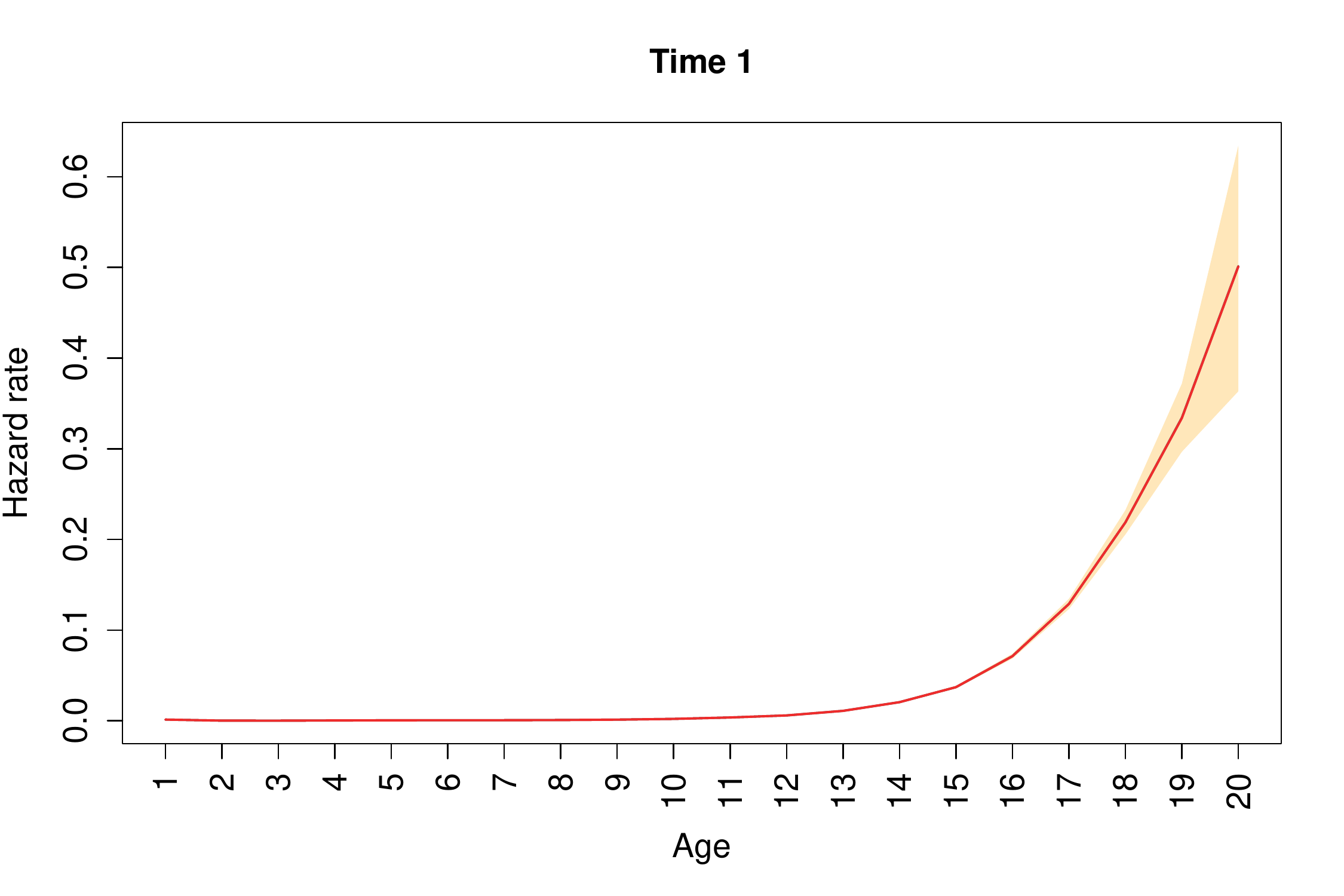}
\includegraphics[scale=0.35]{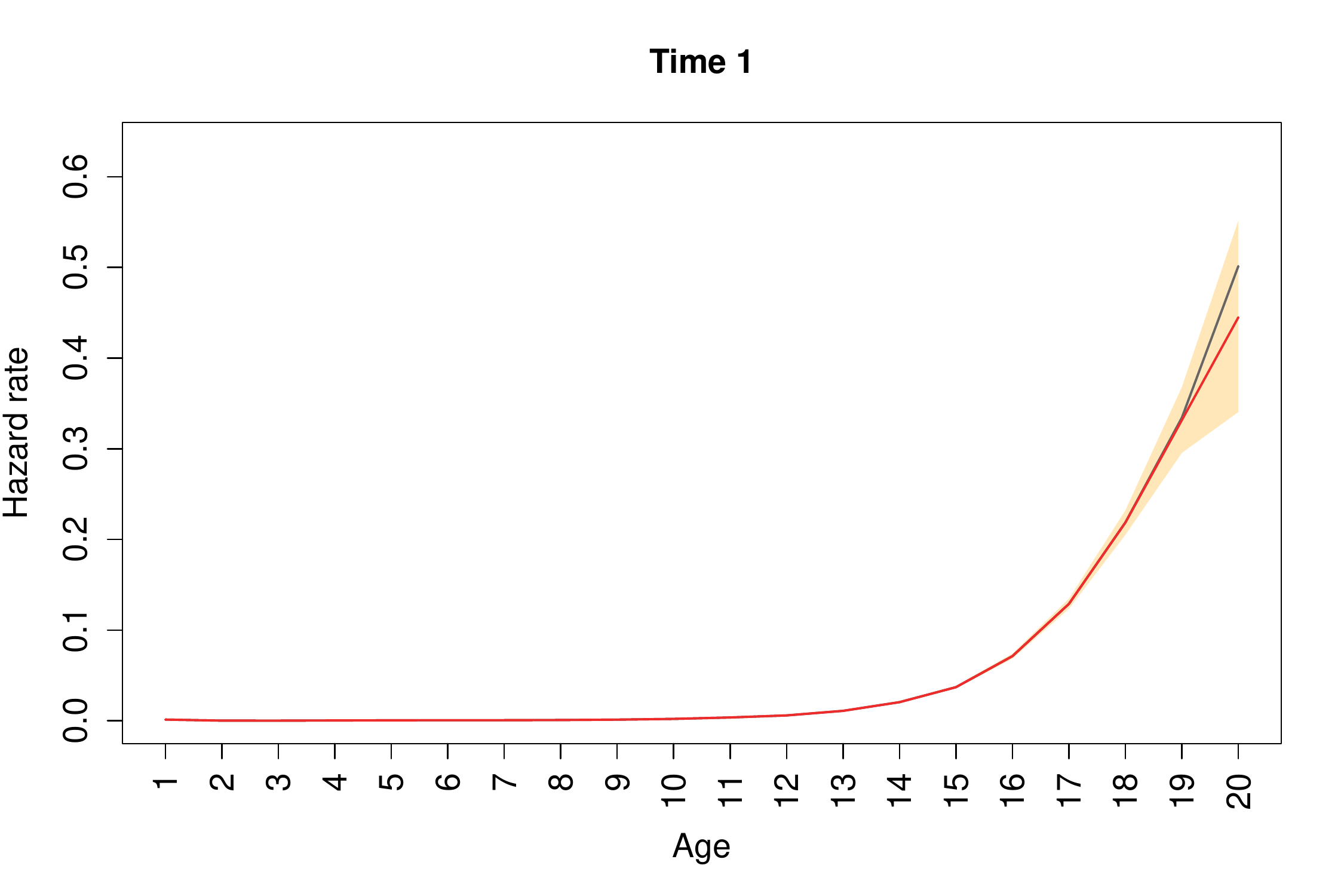}}
\centerline{\includegraphics[scale=0.35]{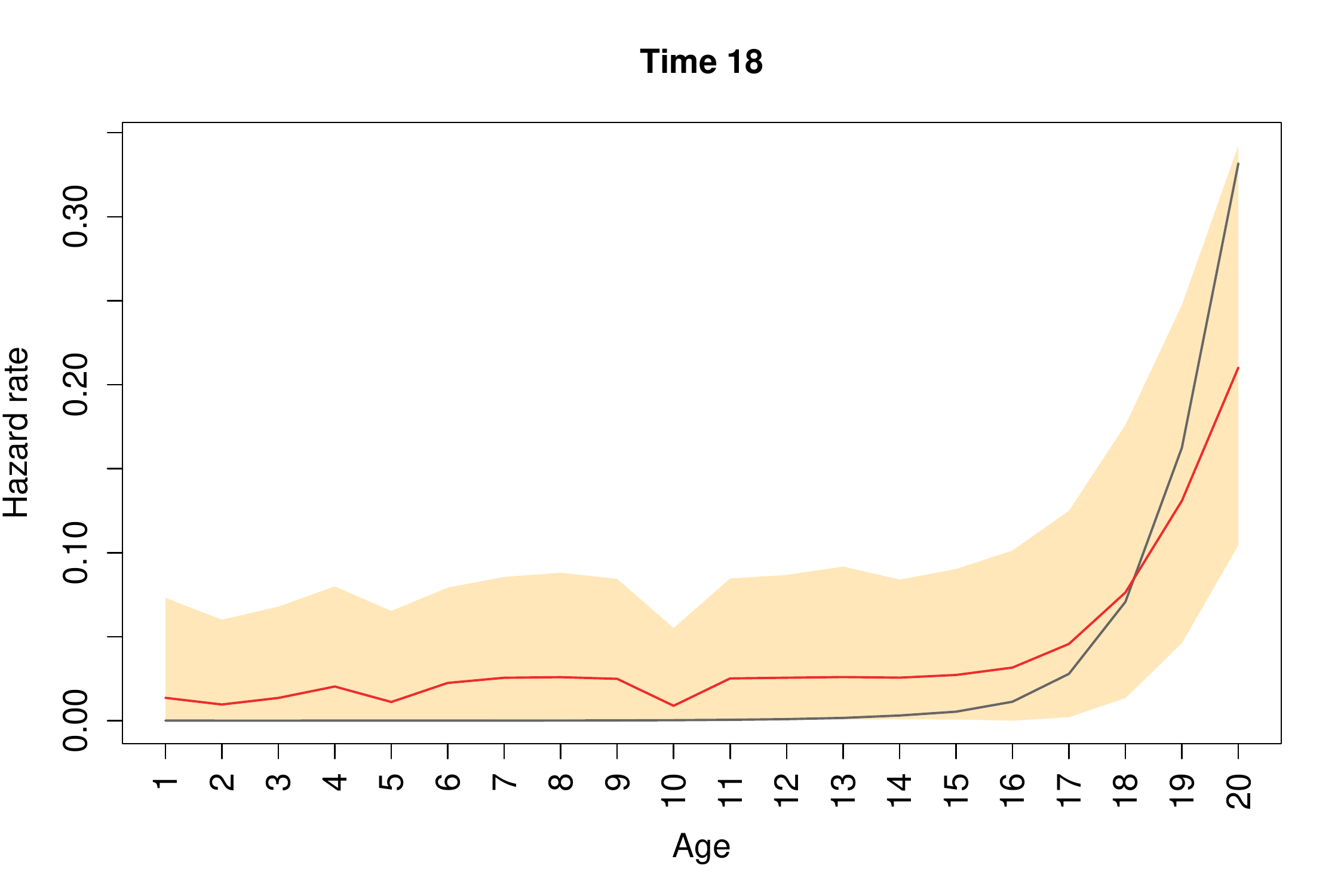}
\includegraphics[scale=0.35]{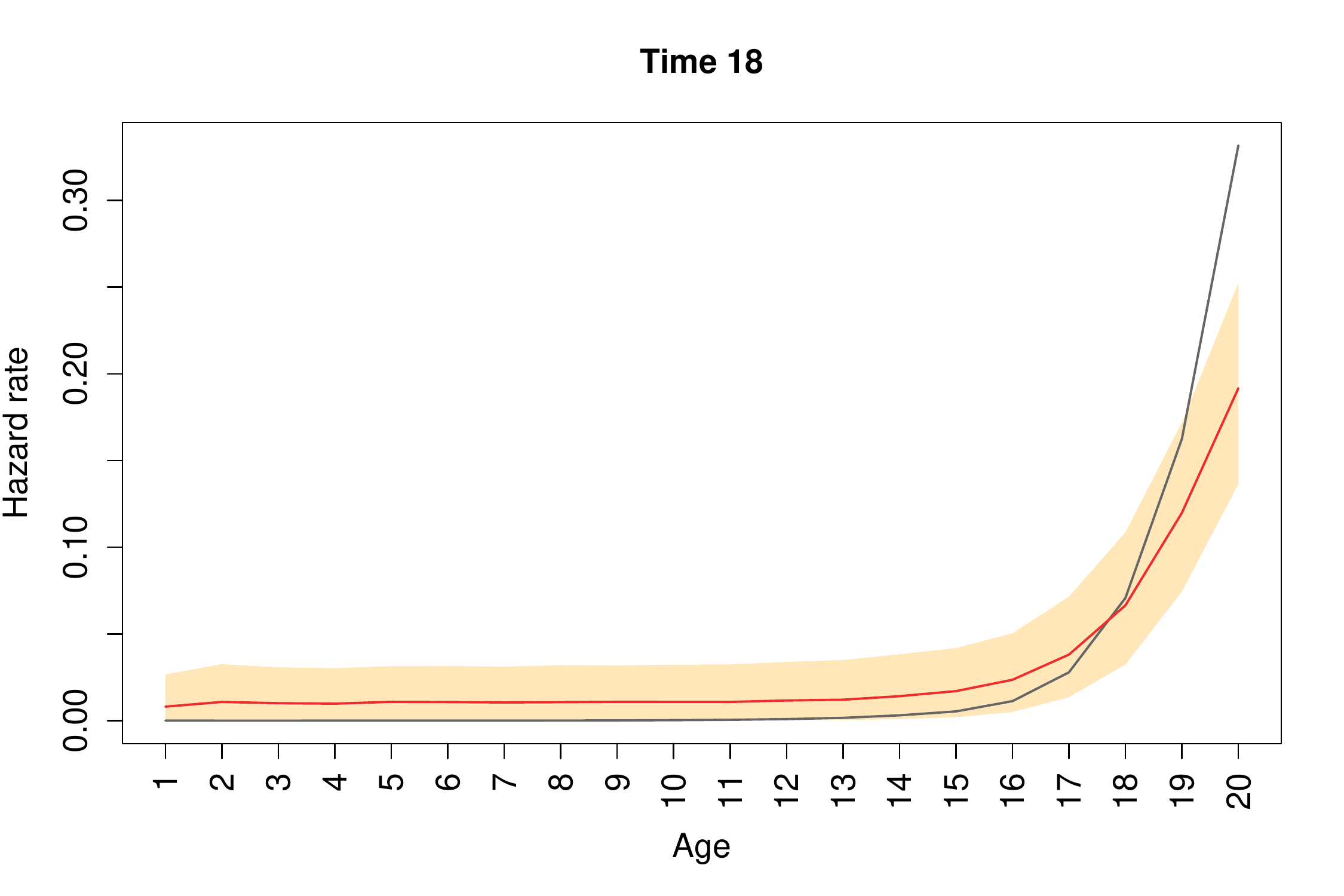}}
\caption{{\small Posterior estimates of hazard rates $h_t(x)$ for in sample (top row) and out of sample (bottom row). For $t=1$ with $c=0$  (top left); for $t=1$ with $p=1$, $q=10$ and $c=20$ (top right). For $t=18$ with $p=1$, $q=10$ and $c=5$ (bottom left); for $t=18$ with $p=1$, $q=10$ and $c=20$ (bottom right).}}
\label{fig:swiss1}
\end{figure}

\end{document}